\definecolor{Darkblue}{rgb}{0,0,0.4}
\definecolor{Brown}{cmyk}{0,0.61,1.,0.60}
\definecolor{Purple}{cmyk}{0.45,0.86,0,0}
\definecolor{Darkgreen}{rgb}{0.133,0.543,0.133}
\crefname{observation}{Observation}{Observations}
\crefname{claim}{Claim}{Claims}
\theoremstyle{plain}
\newtheorem{theorem}{Theorem}[section]
\newtheorem{lemma}[theorem]{Lemma}
\newtheorem{observation}[theorem]{Observation}
\newtheorem{proposition}[theorem]{Proposition}
\newtheorem*{lemma*}{Lemma}
\newtheorem*{claim*}{Claim}
\newtheorem*{proposition*}{Proposition}
\newtheorem*{fact*}{Fact}
\newtheorem*{corollary*}{Corollary}
\newtheorem*{hint*}{Hint}
\theoremstyle{definition}
\newtheorem{definition}[theorem]{Definition}
\newtheorem*{theorem*}{Theorem}
\newtheorem*{definition*}{Definition}
\newtheorem*{remark*}{Remark}
\newtheorem*{notation*}{Notation}
\newtheorem*{example*}{Example}
\newtheorem*{examples*}{Examples}
\newtheorem*{question*}{Question}
\newtheorem*{answer*}{Answer}
\newtheorem*{problem*}{Problem}
\newtheorem*{solution*}{Solution}
\newtheorem*{idea*}{Idea}
\newtheorem*{conjecture*}{Conjecture}
\newcommand{\OO}{\mathcal{O}}
\begin{document}
\hypersetup{pageanchor=false}

\title{A Single-Exponential Time 2-Approximation Algorithm for Treewidth}
\author{Tuukka Korhonen\thanks{Department of Informatics, University of Bergen, Norway. (\texttt{ tuukka.korhonen@uib.no}). Work done while at University of Helsinki, Finland.}}
\date{}
\maketitle
\thispagestyle{empty}

\begin{abstract}
We give an algorithm that, given an $n$-vertex graph $G$ and an integer $k$, in time $2^{\OO(k)} n$ either outputs a tree decomposition of $G$ of width at most $2k + 1$ or determines that the treewidth of $G$ is larger than $k$.
This is the first 2-approximation algorithm for treewidth that is faster than the known exact algorithms, and in particular improves upon the previous best approximation ratio of 5 in time $2^{\OO(k)} n$ given by Bodlaender et al. [SIAM J. Comput., 45 (2016)].
Our algorithm works by applying incremental improvement operations to a tree decomposition, using an approach inspired by a proof of Bellenbaum and Diestel [Comb.~Probab. Comput.,~11~(2002)].
\end{abstract}

\newpage
\hypersetup{pageanchor=true}
\pagestyle{plain}
\pagenumbering{arabic}

\section{Introduction}
Treewidth is a fundamental graph parameter, playing a central role in multiple fields.
In particular, many graph problems that are intractable in general can be solved in time $f(k) n$ when the input includes also a tree decomposition of the graph of width $k$~\cite{DBLP:journals/iandc/Courcelle90}.
For a large number of classical NP-hard graph problems there are in fact such algorithms with time complexity $2^{\OO(k)} n$~\cite{DBLP:journals/dam/ArnborgP89,DBLP:conf/icalp/Bodlaender88,DBLP:journals/iandc/BodlaenderCKN15}.
To use these algorithms, it is crucial to also have an algorithm for finding a tree decomposition with near-optimal width.
In particular, in order to truly obtain algorithms with time complexity $2^{\OO(k)} n$ for these problems, where $k$ is the treewidth of the input graph, one needs to be able to compute a tree decomposition of width at most $ck$, for some constant $c$, in time $2^{\OO(k)} n$.
Moreover, lowering the constant $c$ directly speeds up all of these algorithms.

There is a long history of algorithms for finding tree decompositions with different guarantees on the width of the decomposition and on the time complexity of the algorithm.
See \Cref{tab:history} for an overview of the most relevant of these results.
The first constant-factor approximation algorithm with time complexity of type $f(k) n^{\OO(1)}$ was given by Robertson and Seymour in their graph minors series~\cite{DBLP:journals/jct/RobertsonS95b}.
The dependency on $n$ in this algorithm was improved to $n \log n$ by Reed~\cite{DBLP:conf/stoc/Reed92}, at the cost of a worse approximation ratio and dependency on $k$.
Bodlaender introduced the first algorithm for treewidth with a linear dependency on $n$~\cite{DBLP:journals/siamcomp/Bodlaender96}.
The algorithm of Bodlaender in fact computes a tree decomposition of optimal width, but with a running time dependency of $2^{\OO(k^3)}$ on the width $k$.

\begin{table*}[!b]
\centering
\caption{Overview of treewidth algorithms that, given an $n$-vertex graph $G$ and an integer $k$, in time $f(k) \cdot g(n)$ either output a tree decomposition of width at most $\alpha(k)$ or determine that the treewidth of $G$ is larger than $k$.}
\label{tab:history}
\begin{tabular}{|c | c | c | c|}
\hline
Reference & $\alpha(k)$ & $f(k)$ & $g(n)$\\
\hline
Arnborg, Corneil, and Proskurowski~\cite{arnborg1987complexity} & $k$ & $\OO(1)$ & $n^{k+2}$\\
Robertson and Seymour~\cite{DBLP:journals/jct/RobertsonS95b} & $4k + 3$ & $\OO(3^{3k})$ & $n^2$\\
Lagergren~\cite{DBLP:journals/jal/Lagergren96} & $8k+7$ & $2^{\OO(k \log k)}$ & $n \log^2 n$\\
Reed~\cite{DBLP:conf/stoc/Reed92} & $8k + \OO(1)$ & $2^{\OO(k \log k)}$ & $n \log n$\\
Bodlaender~\cite{DBLP:journals/siamcomp/Bodlaender96} & $k$ & $2^{\OO(k^3)}$ & n\\
Amir~\cite{DBLP:journals/algorithmica/Amir10} & $4.5k$ & $\OO(2^{3k} k^{3/2})$ & $n^2$\\
Amir~\cite{DBLP:journals/algorithmica/Amir10} & $(3 + 2/3)k$ & $\OO(2^{3.7k})$ & $n^2$\\
Amir~\cite{DBLP:journals/algorithmica/Amir10} & $\OO(k \log k)$ & $\OO(k \log k)$ & $n^4$\\
Feige, Hajiaghayi, and Lee~\cite{DBLP:journals/siamcomp/FeigeHL08} & $\OO(k \sqrt{\log k})$ & $\OO(1)$ & $n^{\OO(1)}$\\
Fomin, Todinca, and Villanger~\cite{DBLP:journals/siamcomp/FominTV15} & $k$ & $\OO(1)$ & $1.7347^n$\\
Fomin et al.~\cite{DBLP:journals/talg/FominLSPW18} & $\OO(k^2)$ & $\OO(k^7)$ & $n \log n$\\
Belbasi and F{\"{u}}rer~\cite{DBLP:conf/cocoa/BelbasiF21} & $5k+4$ & $\OO(2^{6.76k})$ & $n \log n$\\
Bodlaender et al.~\cite{DBLP:journals/siamcomp/BodlaenderDDFLP16} & $3k +4$ & $2^{\OO(k)}$ & $n \log n$\\
Bodlaender et al.~\cite{DBLP:journals/siamcomp/BodlaenderDDFLP16} & $5k+4$ & $2^{\OO(k)}$ & $n$\\
This paper & $2k+1$ & $2^{\OO(k)}$ & $n$\\
\hline
\end{tabular}
\end{table*}

While after the first half of the 1990s multiple improvements to treewidth approximation were given~\cite{DBLP:journals/algorithmica/Amir10,DBLP:journals/siamcomp/FeigeHL08}, the problem of constant-factor approximating treewidth in time $2^{\OO(k)} n$ stood until 2013, when Bodlaender, Drange, Dregi, Fomin, Lokshtanov, and Pilipczuk gave a $2^{\OO(k)} n$ time 5-approximation algorithm for treewidth~\cite{DBLP:journals/siamcomp/BodlaenderDDFLP16}.
In the same article they also gave a 3-approximation algorithm with time complexity $2^{\OO(k)} n \log n$.
Prior to the present work, the aforementioned 5-approximation algorithm was the only constant-factor approximation algorithm with time complexity $2^{\OO(k)} n$ and the 3-approximation algorithm had the best approximation ratio achieved in time $2^{\OO(k)} n^{\OO(1)}$.

In this paper we improve upon both of the algorithms given in~\cite{DBLP:journals/siamcomp/BodlaenderDDFLP16}.

\begin{theorem}
\label{the:main}
There is an algorithm that, given an $n$-vertex graph $G$ and an integer $k$, in time $2^{\OO(k)} n$ either outputs a tree decomposition of $G$ of width at most $2k + 1$ or determines that the treewidth of $G$ is larger than $k$.
\end{theorem}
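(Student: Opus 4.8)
The plan is to realize a local-improvement scheme: maintain a tree decomposition of $G$ and repeatedly repair bags that are too large, following the strategy that Bellenbaum and Diestel use to construct lean tree decompositions, but making every step both algorithmic and cheap. First I would obtain an initial tree decomposition of $G$ of width $O(k)$ — for instance by running the known $2^{O(k)}n$-time $5$-approximation of Bodlaender et al., which either certifies $\tw(G) > k$ (and then we are done) or returns such a decomposition $\mathcal{T}$. From then on I would keep the invariant that every bag has $O(k)$ vertices. A tree decomposition has width at most $2k+1$ precisely when no bag has $\ge 2k+3$ vertices, so it suffices to eliminate such ``violating'' bags one at a time while never creating a substantially larger one.

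The engine is the following combinatorial fact, which I would prove first: if $\tw(G) \le k$ and $B \subseteq V(G)$ satisfies $|B| \ge 2k+3$, then $G$ has a separation $(A_1, A_2)$ of order at most $k+1$ with $B \not\subseteq A_1$ and $B \not\subseteq A_2$. This follows from the standard observation that an optimal tree decomposition of $G$ contains a bag $S$ (so $|S| \le k+1$) that is a balanced separator for the weight function $\mathbf{1}_B$, i.e.\ every component of $G - S$ meets $B$ in at most $\lfloor |B|/2 \rfloor$ vertices; since $|B \setminus S| \ge |B| - k - 1 > \lfloor |B|/2 \rfloor$ (as $|B| \ge 2k+3$), at least two components of $G - S$ meet $B$, and taking $A_1 = S \cup C$ for one such component $C$ and $A_2 = V(G) \setminus C$ gives the claim. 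The threshold $2k+3$ is exactly where this argument begins to work, which is why the target width is $2k+1$.

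Given a violating bag $B = B_t$ I would search for such a separation, exploiting that $|B| = O(k)$: enumerate all ways to partition $B$ as $X_1 \sqcup X_2 \sqcup X_S$ with $X_1, X_2$ nonempty — only $2^{O(k)}$ choices — and for each compute a minimum $X_1$–$X_2$ vertex separator in $G - X_S$ by a bounded-size flow computation; if some choice yields a separator $R$ with $|R| + |X_S| \le k+1$ then $S := R \cup X_S$ splits $B$, and if no choice does then by the previous paragraph $\tw(G) > k$ and the algorithm halts. Having found $S$, I would refine $\mathcal{T}$ only in the vicinity of $t$, in the manner of Bellenbaum and Diestel: pass to the restrictions of the decomposition to $A_1$ and to $A_2$ — in which the bag at $t$ splits into the two strictly smaller pieces $B \cap A_1$ and $B \cap A_2$ — and re-glue them along $S$, adding $S$ to a bounded set of bags so that $S$ lies in a common bag of each side; every new bag is contained in an old bag together with $S$, so the $O(k)$-size invariant is preserved. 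One then verifies (the part imported essentially verbatim from Bellenbaum–Diestel) that the result is a valid tree decomposition of $G$ and that a suitable potential — a weighted count of bags by size, or the decreasing sequence of bag sizes compared lexicographically — strictly decreases, so the process terminates.

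The genuine difficulty is the running time: Bellenbaum–Diestel only gives termination, and a naive loop — global separator searches plus a full rescan of the decomposition after every change — costs $n^{O(1)}$ or worse. The crux is an amortized analysis achieving total time $2^{O(k)} n$: one must choose the potential so that it is bounded by $2^{O(k)} n$ at the start, decreases at every improvement step, and — most delicately — so that the cost of each improvement step (dominated by the separator search) can be charged against the corresponding decrease. Making this work requires performing the separator searches locally with respect to the current decomposition, using that it has width $O(k)$ so that a bounded-width dynamic program replaces global maximum flow, and reusing this work across steps rather than recomputing it from scratch. I expect this amortization — turning the combinatorial ``some potential decreases'' argument into a genuinely linear-in-$n$ algorithm — to absorb essentially all of the effort; the combinatorial core and the per-step improvement are comparatively routine.
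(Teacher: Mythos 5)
There is a genuine gap at the combinatorial core of your plan. Your key fact --- if $\operatorname{tw}(G)\le k$ and $|B|\ge 2k+3$ then there is a separation $(A_1,A_2)$ of order $\le k+1$ with $B\not\subseteq A_1$ and $B\not\subseteq A_2$ --- is true but too weak to drive the improvement step. For the Bellenbaum--Diestel refinement to make progress, the two new bags replacing $B$ are (up to the separator insertions) $(B\cap A_1)\cup S$ and $(B\cap A_2)\cup S$, and you need \emph{both} to have size strictly less than $|B|$. Your own proof of the fact shows why this fails: you take $A_1=S\cup C$ for a single component $C$ of $G-S$ with $|B\cap C|\le |B|/2$, but the other side $A_2$ lumps together all remaining components and can contain up to $|B|-1$ vertices of $B$, so $|(B\cap A_2)\cup S|$ can be as large as, or larger than, $|B|$. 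With two-way splits, balancing the components only guarantees each side meets $B$ in at most $2|B|/3$ vertices, so the improvement condition $2|B|/3+k+1<|B|$ needs $|B|\ge 3k+4$; this is exactly why the naive adaptation of Bellenbaum--Diestel gives only a 3-approximation (the paper notes there are treewidth-$k$ graphs with ``unsplittable'' sets of size about $3k$, so this is not an artifact of the analysis). The missing idea, which is the paper's main innovation, is to use \emph{three-way} splits: a partition $(C_1,C_2,C_3,X)$ of $V(G)$ with no edges between distinct $C_i$ and $|(W\cap C_i)\cup X|<|W|$ for every $i$; by repeatedly merging the two components meeting $W$ least, each of three parts can be kept to at most $|W|/2$ of $W$, so such a split exists whenever $|W|\ge 2k+3$, which is what makes the target width $2k+1$ reachable. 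Your scheme as written would either stall (no potential decrease) or let the width creep upward once bags of size between $2k+3$ and roughly $3k+3$ appear.

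Two further points. First, even in the regime where a split exists, ``other bags do not grow'' is not automatic from ``every new bag is contained in an old bag together with $S$'': it requires choosing the separator to be minimum (first in size, then in a distance measure to the split bag), which is what the paper's Lemma~\ref{lem:main} exploits; you should not treat this as purely routine, though it is indeed present in Bellenbaum--Diestel. Second, you correctly identify the amortized $2^{O(k)}n$ implementation as the remaining difficulty but do not supply it; the paper resolves it with a potential of the form $\sum_i 7^{|B_i|}$, an ``editable subtree'' version of the splitting that touches only $O(\phi(T)-\phi(T''))$ nodes, and a rootable dynamic-programming data structure supporting Move/Split/Edit queries, so this part is a real component of the proof rather than an afterthought. (Your choice of bootstrapping via the known $5$-approximation instead of Bodlaender's compression self-reduction is fine and only a cosmetic difference.)
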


To further compare our algorithm to the results of Bodlaender et al., we remark that our algorithm has significantly smaller exponential dependency on $k$ hidden in the $2^{\OO(k)}$ factor than what the techniques of~\cite{DBLP:journals/siamcomp/BodlaenderDDFLP16} yield, although we note that the main goal of neither our work nor their work was to optimize this factor.
Our algorithm also makes progress in that it is the first treewidth approximation algorithm to significantly deviate from the basic shape introduced by Robertson and Seymour~\cite{DBLP:journals/jct/RobertsonS95b}.
In particular, all previous treewidth approximation algorithms~\cite{DBLP:journals/algorithmica/Amir10,DBLP:conf/cocoa/BelbasiF21,DBLP:journals/siamcomp/BodlaenderDDFLP16,DBLP:journals/jal/BodlaenderGHK95,DBLP:journals/siamcomp/FeigeHL08,DBLP:journals/talg/FominLSPW18,DBLP:journals/jal/Lagergren96,DBLP:conf/stoc/Reed92,DBLP:journals/jct/RobertsonS95b} follow the same basic structure of building the tree decomposition in a top-down manner from the root to the leafs by using small balanced separators.


Our algorithm is instead based on iteratively making local improvements to a tree decomposition, with the method of improvement inspired by the work of Bellenbaum and Diestel~\cite{DBLP:journals/cpc/BellenbaumD02}.
To the best of the author's knowledge, our algorithm is the first to apply the technique of Bellenbaum and Diestel in the context of computing treewidth.
The technique of Bellenbaum and Diestel has been applied before~\cite{DBLP:journals/talg/CyganKLPPSW21,DBLP:conf/focs/Lokshtanov0S20} for optimizing a different criterion on tree decompositions, with applications to improved parameterized algorithms for minimum bisection, Steiner cut, and Steiner multicut~\cite{DBLP:journals/talg/CyganKLPPSW21}, and for obtaining a parameterized approximation scheme for minimum $k$-cut~\cite{DBLP:conf/focs/Lokshtanov0S20}.

Before describing our algorithm we further mention some related work.
There are multiple refinements of the $2^{\OO(k^3)}n$ time exact treewidth algorithm of Bodlaender~\cite{DBLP:journals/siamcomp/Bodlaender96}, including in particular the version of Perkovi\'{c} and Reed~\cite{DBLP:journals/ijfcs/PerkovicR00} that has applications to the disjoint paths problem and the logarithmic space version of Elberfeld, Jakoby, and Tantau~\cite{DBLP:conf/focs/ElberfeldJT10}.
Also, recently Bodlaender, Jaffke, and Telle gave additional structural insights on the technique of typical sequences used in the algorithm~\cite{bodlaender2021typical}.
While exact computation of treewidth is known to be NP-complete~\cite{arnborg1987complexity}, currently the strongest hardness result against approximation is that by Wu, Austrin, Pitassi, and Liu, assuming the small set expansion conjecture there is no polynomial-time \mbox{$c$-approximation} algorithm for treewidth for any constant $c$~\cite{DBLP:journals/jair/WuAPL14}.
The best known polynomial-time approximation algorithm has approximation ratio $\OO(\sqrt{\log k})$~\cite{DBLP:journals/siamcomp/FeigeHL08}.

\paragraph{Outline of the algorithm}
Our algorithm is based on applying incremental improvement operations to a tree decomposition.
These improvement operations are inspired by a proof of Bellenbaum and Diestel~\cite{DBLP:journals/cpc/BellenbaumD02}, in particular, by the proof of Theorem~3 therein.
This proof shows that if a tree decomposition has a bag that is not ``lean'',\footnote{We omit the definition of ``lean'' because it is ultimately not used in our algorithm.} then the tree decomposition can be improved by a specific improvement operation.
We generalize this improvement operation, and show that with our operation a tree decomposition can be improved as long as its width is larger than $2k+1$, where $k$ is the treewidth of the graph.

In particular, we say that a bag $W \subseteq V(G)$ of a tree decomposition $T$ of a graph $G$ is \emph{splittable} if the vertex set $V(G)$ of $G$ can be partitioned into $(C_1, C_2, C_3, X)$ so that $|(C_i \cap W) \cup X| < |W|$ holds for each $i \in \{1,2,3\}$ and there are no edges between $C_i$ and $C_j$ for $i \neq j$.
We show that any bag $W$ of size $|W| > 2k+2$ is splittable, where $k$ is the treewidth of $G$.
Such bag $W$ and partition $(C_1, C_2, C_3, X)$ of $V(G)$ is then used to construct an improved tree decomposition, having width at most the width of $T$ and strictly less bags of size $|W|$.

It follows that when starting with an initial tree decomposition $T$ of width $\OO(k)$, we can obtain a tree decomposition of width $2k+1$ by iteratively applying $\OO(nk)$ improvement operations on largest bags, each of which could be implemented in $2^{\OO(k)} n$ time by finding the partition $(C_1, C_2, C_3, X)$ with standard dynamic programming techniques on $T$.
This approach would result in a $2^{\OO(k)} n^2$ time algorithm.

To optimize the dependency on $n$ to be linear, we first show that the improvement operations to the tree decomposition can be implemented so that over the course of the algorithm, in total $2^{\OO(k)} n$ bag edits are made, and the bag edits of each improvement operation are limited to a connected subtree around the splittable bag $W$.
We then implement computing the partitions $(C_1, C_2, C_3, X)$ by dynamic programming over the tree decomposition $T$ that we are also editing at the same time.
When editing the tree decomposition, we recompute the dynamic programming states only for the edited bags.
Then the process of applying improvement operations to the tree decomposition is implemented by ``walking'' over the tree decomposition in a way that maintains dynamic programming tables directed towards the current node.
Whenever a splittable largest bag $W$ is encountered in this walk, the improvement operation is performed in a local manner, recomputing the tree decomposition and dynamic programming states only for the edited subtree around $W$.

Our algorithm depends on having an initial tree decomposition of width $\OO(k)$ as an input.
By the compression technique of Bodlaender~\cite{DBLP:journals/siamcomp/Bodlaender96} (Lemma~2.7 in~\cite{DBLP:journals/siamcomp/BodlaenderDDFLP16}), any approximation algorithm for treewidth that outputs a tree decomposition of width $\alpha(k)$ can be assumed to have a tree decomposition of width at most $2\alpha(k)+1$ as an input, incurring an overhead of factor $k^{\OO(1)}$ to the running time.
In particular, in our algorithm we can assume a tree decomposition of width $4k+3$ as an input.
Our algorithm does not depend on black-box use of any other results than this compression technique.

The remaining part of this paper is organized as follows.
In \Cref{sec:preli} we provide definitions and preliminary results.
In \Cref{sec:impr} we introduce the tree decomposition improvement operation and prove its main graph-theoretical properties.
In \Cref{sec:local} we modify the improvement operation to work in a ``local'' manner and bound the number of bag edits over the course of the algorithm.
In \Cref{sec:opt} we give an efficient implementation of the improvement operations with dynamic programming.
We conclude in \Cref{sec:conc}.

\section{Preliminaries}
\label{sec:preli}
\subsection{Notation}
We use the convention that a partition of a set may contain empty parts.

The vertices of a graph $G$ are denoted by $V(G)$ and edges by $E(G)$.
The neighborhood of a vertex set $X \subseteq V(G)$ is $N(X) = \{v \in V(G) \setminus X \mid u \in X, uv \in E(G)\}$.
The subgraph induced by a vertex set $X \subseteq V(G)$ is denoted by $G[X]$ and the subgraph induced by a vertex set $V(G) \setminus X$ is denoted by $G \setminus X$.
We use the convention that a connected component is a set of vertices, i.e., a connected component of $G$ is a maximal set $C \subseteq V(G)$ so that $G[C]$ is connected.
A path is a sequence $v_1, v_2, \ldots, v_p$ of distinct vertices so that $v_i v_{i+1} \in E(G)$ for each $1 \le i < p$.
Let $X$ and $Y$ be possibly overlapping vertex sets.
A vertex set $S$ separates $X$ from $Y$ (and $Y$ from $X$) if every path that intersects both $X$ and $Y$ intersects also $S$.

A tree is a connected acyclic graph.
A subtree is a connected induced subgraph of a tree.
To distinguish trees from graphs, vertices of a tree are called nodes. 
A rooted tree has one node $r$ chosen as a root.
A node $j$ of a rooted tree is a descendant of a node $i$ if $\{i\}$ separates $\{j\}$ from $\{r\}$.
Conversely, such $i$ is an ancestor of such $j$.
If $i \neq j$ the ancestor/descendant relation may be called \emph{strict} and if $i$ and $j$ are adjacent they may be called parent/child.
A rooted subtree rooted at a node $i$ of a rooted tree is the subtree induced by the descendants of the node $i$.

\subsection{Tree decompositions}
A tree decomposition of a graph $G$ is a tree $T$ each of whose nodes $i \in V(T)$ is associated with a set $B_i \subseteq V(G)$ called a bag so that
\begin{enumerate}
\item \label{tddef:vertcond} $V(G) = \bigcup_{i \in V(T)} B_i$,
\item \label{tddef:edgecond} for each $uv \in E(G)$ there is $i \in V(T)$ with $\{u, v\} \subseteq B_i$, and
\item \label{tddef:conncond} for each $v \in V(G)$ the subgraph of $T$ induced by nodes $\{i \in V(T) \mid v \in B_i\}$ is connected (and therefore is a subtree).
\end{enumerate}

We call the above three conditions the conditions \ref{tddef:vertcond}-\ref{tddef:conncond} of tree decompositions.
The condition~\ref{tddef:conncond} is also called the \emph{connectedness condition}.

The width of a tree decomposition is $\max_{i \in V(T)} |B_i|-1$ and the treewidth of a graph $G$ is the minimum width of a tree decomposition of $G$.
We will often abuse notation by talking about a bag $B_i$ instead of the node $i$.
We usually treat a tree decomposition $T$ as rooted at some selected root node, whose bag is usually denoted by $W$.
With respect to the root bag $W$, the \emph{home bag} of a vertex $x \in V(G)$ is the bag containing $x$ that is the closest to the root $W$ among all bags containing $x$.
Note that all bags containing $x$ are descendants of the home bag of $x$.

We will use the following standard lemma implicitly throughout the paper.
\begin{lemma}
Let $T_1$, $T_2$ be subtrees of a tree decomposition $T$ of a graph $G$ that are separated by a node $i$ of $T$.
The vertex sets $V_1 = \bigcup_{j \in V(T_1)} B_j$ and $V_2 = \bigcup_{j \in V(T_2)} B_j$ are separated by $B_i$ in $G$.
\end{lemma}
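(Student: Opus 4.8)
The plan is to argue by contradiction: assuming $B_i$ does not separate $V_1$ from $V_2$ in $G$, I will produce a path in $T$ between a node of $T_1$ and a node of $T_2$ that avoids $i$, contradicting the hypothesis. First I would reduce to a single simple path: a failure of separation gives a path in $G$ meeting both $V_1$ and $V_2$ but disjoint from $B_i$, and the subpath between an occurrence of a $V_1$-vertex and an occurrence of a $V_2$-vertex is a simple path $P = (x_0, x_1, \dots, x_\ell)$ with $x_0 \in V_1$, $x_\ell \in V_2$, and no $x_j$ in $B_i$ (the degenerate case $x_0 = x_\ell$ is handled uniformly below). For $x \in V(G)$ write $T_x = \{j \in V(T) \mid x \in B_j\}$, which by the third tree decomposition axiom is a subtree of $T$ and by the first axiom is nonempty.

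The key structural observation I would establish is that the components of the forest $T \setminus \{i\}$ respect the split into $T_1$ and $T_2$: no component $C$ of $T \setminus \{i\}$ contains both a node of $V(T_1) \setminus \{i\}$ and a node of $V(T_2) \setminus \{i\}$. Indeed, a path inside the connected set $C$ between two such nodes would be a path in $T$ from $V(T_1)$ to $V(T_2)$ that avoids $i$, contradicting that $i$ separates $V(T_1)$ from $V(T_2)$ in $T$. (Taking length-zero paths, the same hypothesis also gives $V(T_1) \cap V(T_2) \subseteq \{i\}$, so this dichotomy is meaningful even when $i$ itself lies in $T_1$ or $T_2$; in the typical use of the lemma $i$ lies in neither, and then each $T_j$ is contained in a single component of $T \setminus \{i\}$.)

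With this in hand I would run the standard ``follow the path through the tree'' argument. Since no $x_j$ lies in $B_i$, each subtree $T_{x_j}$ avoids $i$, hence lies entirely within one component $K_j$ of $T \setminus \{i\}$. For consecutive vertices, the edge $\{x_j, x_{j+1}\} \in E(G)$ is covered by some bag $B_c \supseteq \{x_j, x_{j+1}\}$ by the second axiom, so $c \in T_{x_j} \cap T_{x_{j+1}}$ with $c \neq i$; therefore $K_j = K_{j+1}$, and all $K_j$ coincide with a single component $K$. Now $x_0 \in V_1$ yields a node $a \in V(T_1)$ with $x_0 \in B_a$, so $a \in T_{x_0} \subseteq K$ and $a \neq i$, i.e.\ $K$ meets $V(T_1) \setminus \{i\}$; symmetrically $K$ meets $V(T_2) \setminus \{i\}$. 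This contradicts the key observation, so $P$ must in fact meet $B_i$, establishing the lemma.

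I do not anticipate a genuine obstacle: this is a routine consequence of the three tree decomposition axioms, and each step is essentially a one-line deduction. The only point requiring a moment of care is the bookkeeping about whether the node $i$ itself belongs to $T_1$ or $T_2$, which is fully handled by the observation $V(T_1) \cap V(T_2) \subseteq \{i\}$ together with the component dichotomy above; if one is content to state the lemma under the (standard) assumption that $i$ lies outside both $T_1$ and $T_2$, even that point disappears.
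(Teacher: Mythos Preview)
Your proof is correct; it is precisely the standard argument via the three tree-decomposition axioms, and you handle the edge case $i \in V(T_1) \cup V(T_2)$ cleanly. Note that the paper does not actually supply a proof of this lemma: it is introduced with the remark that it is a standard fact to be used implicitly, so there is nothing to compare against beyond confirming that your argument is sound --- which it is.
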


Let $W \subseteq V(G)$ be a vertex set in $G$.
A \emph{balanced separator} of $W$ is a vertex set $X$ such that for each connected component $C$ of $G \setminus X$ it holds that $|W \cap C| \le |W|/2$.
The existence of balanced separators with size bounded by treewidth is a classical lemma from Graph Minors II.

\begin{lemma}[\cite{DBLP:journals/jal/RobertsonS86}]
\label{lem:balsep}
Let $G$ be a graph of treewidth $k$ and $W \subseteq V(G)$ a vertex set in $G$.
There is a balanced separator $X$ of $W$ of size $|X| \le k+1$.
\end{lemma}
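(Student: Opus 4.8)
The plan is to take a tree decomposition $(T,\{B_t\}_{t\in V(T)})$ of $G$ of width at most $k$, which exists because $\tw(G)=k$, and to show that one of its bags can already serve as $X$; then $|X|\le k+1$ is automatic and only the balancing condition needs work. (If $W=\emptyset$, or $T$ has a single node, the statement is trivial, so assume otherwise.)

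First I would turn the balancing condition into a statement about $T$ alone. For each $w\in W$ fix a node $h(w)\in V(T)$ with $w\in B_{h(w)}$, and define node weights $\mu(t)=|\{w\in W : h(w)=t\}|$, so $\mu\ge 0$ and $\sum_{t}\mu(t)=|W|$; extend $\mu$ to subtrees by summation. I then want a node $i$ such that every component of $T-i$ has $\mu$-weight at most $|W|/2$. To get it, orient each edge $e=\{t_1,t_2\}$ of $T$ towards whichever endpoint $t_j$ lies on the heavier side, i.e.\ with $\mu(T_j)\ge\mu(T_{3-j})$ where $T_j$ is the component of $T-e$ containing $t_j$ (ties broken arbitrarily). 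Since $T$ has fewer edges than nodes, some node $i$ has out-degree $0$, so for every neighbour $t'$ of $i$ the edge $\{i,t'\}$ points to $i$, giving $\mu(T_{t'})\le\mu(T_i)=|W|-\mu(T_{t'})$ and hence $\mu(T_{t'})\le |W|/2$; as $t'$ ranges over the neighbours of $i$, the subtrees $T_{t'}$ are exactly the components of $T-i$.

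Next I would put $X=B_i$ and check it is a balanced separator of $W$. Let $C$ be a connected component of $G\setminus B_i$. Using the standard separation lemma above (subtrees of $T$ separated by a node $i$ have their unions of bags separated by $B_i$ in $G$), I would argue that $C$ is contained in $\bigcup_{t\in T'}B_t$ for a single component $T'$ of $T-i$: every vertex outside $B_i$ lies in some such union, and two vertices lying in unions belonging to different components of $T-i$ cannot be joined by a path avoiding $B_i$, whereas $C$ is connected in $G\setminus B_i$. Finally, for each $w\in W\cap C$ I would show $h(w)\in T'$: we have $w\notin B_i$, yet $w$ lies in some bag of $T'$ and in $B_{h(w)}$, so the set $\{t:w\in B_t\}$, which is connected by the third tree-decomposition axiom and would contain $i$ if it met two components of $T-i$, must be confined to $T'$. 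Therefore $|W\cap C|\le\mu(T')\le |W|/2$.

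I expect the one delicate point to be confining a whole component $C$ of $G\setminus B_i$ — and the home nodes of its $W$-vertices — to a single component of $T-i$; this is exactly where the separation lemma and the subtree axiom are needed, and it must be argued with care because a single bag can intersect several components of $T-i$. Everything else — the weighted ``centroid'' claim for $T$ and the size bound $|X|\le k+1$ — is then routine.
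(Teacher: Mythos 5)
Your proof is correct. Note that the paper does not prove this lemma at all: it is quoted as a classical result from Robertson--Seymour's Graph Minors II and used as a black box, so there is no in-paper argument to compare against. Your argument is the standard one for this fact: pick a tree decomposition of width $k$, assign each $w \in W$ a home node to get node weights $\mu$, find a weighted centroid $i$ of $T$ via the edge-orientation counting argument, and take $X = B_i$. The two delicate points you flag are handled correctly: each vertex outside $B_i$ lies in bags of exactly one component of $T - i$ (its bag-subtree is connected and misses $i$), so a connected component $C$ of $G \setminus B_i$ is confined to the union of bags of a single component $T'$ of $T - i$ by the separation lemma, and the same subtree-connectivity argument places the home nodes of $W \cap C$ inside $T'$, giving $|W \cap C| \le \mu(T') \le |W|/2$, with $|X| \le k+1$ immediate from the width bound. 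This is a complete, self-contained proof of the cited lemma.
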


\subsection{Bodlaender's compression technique}
The only black-box result that our algorithm relies on is the compression technique of Bodlaender, introduced in~\cite{DBLP:journals/siamcomp/Bodlaender96} and also exploited in~\cite{DBLP:journals/siamcomp/BodlaenderDDFLP16}.
Now we briefly discuss on how we use the technique.
For more details on the technique we refer to~\cite{DBLP:journals/siamcomp/Bodlaender96} and on its application to approximation to~\cite{DBLP:journals/siamcomp/BodlaenderDDFLP16}.

\begin{proposition}[\cite{DBLP:journals/siamcomp/Bodlaender96}]
\label{pro:bodl}
There is an algorithm that, given an $n$-vertex graph $G$ and an integer $k$, in $k^{\OO(1)} n$ time either
\begin{enumerate}
\item\label{ec1} determines that the treewidth of $G$ is larger than $k$,
\item\label{ec2} returns a matching in $G$ with at least $n/\OO(k^6)$ edges, or
\item\label{ec3} returns a graph $G'$ with at most $n - n/\OO(k^6)$ vertices so that the treewidth of $G'$ is at most the treewidth of $G$ if the treewidth of $G$ at most $k$, and furthermore, any tree decomposition of $G'$ of width $\le k$ can be turned into a tree decomposition of $G$ of width $\le k$ in $k^{\OO(1)} n$ time.
\end{enumerate}
\end{proposition}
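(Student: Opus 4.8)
The plan is to follow Bodlaender's construction~\cite{DBLP:journals/siamcomp/Bodlaender96}, built around the \emph{improved graph}. First I would compute $|E(G)|$ in $O(n + |E(G)|)$ time; if $|E(G)| > kn$ then $\tw(G) > k$, since a graph of treewidth at most $k$ has at most $kn$ edges, and we return outcome~\eqref{ec1}. From now on $|E(G)| \le kn$, so scanning $G$ costs $k^{O(1)} n$. Next I would build the improved graph $G^{+}$, obtained from $G$ by repeatedly adding an edge between any two non-adjacent vertices that have more than $k$ common neighbours. Using the standard fact that $\tw(G^{+}) = \tw(G)$ whenever $\tw(G) \le k$, we may abort this construction and return~\eqref{ec1} the instant $|E(G^{+})|$ exceeds $kn$: then $\tw(G^{+}) > k$, hence $\tw(G) > k$ too by the same fact. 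Thus the construction stays within a $k^{O(1)} n$ edge budget, and the incremental bookkeeping of~\cite{DBLP:journals/siamcomp/Bodlaender96} implements it in $k^{O(1)} n$ time.

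Call a vertex $v$ \emph{removable} if $N_{G^{+}}(v)$ induces a clique of $G^{+}$ on at most $k$ vertices. If some vertex has its $G^{+}$-neighbourhood a clique on more than $k$ vertices, then $G^{+}$ contains a clique of size at least $k + 2$, so $\tw(G^{+}) > k$ and we return~\eqref{ec1}; hence we may assume every vertex whose $G^{+}$-neighbourhood is a clique is in fact removable. I would then compute a greedy maximal matching $M$ of the subgraph of $G$ induced by the vertices of $G$-degree at most $4k$, and branch as follows. If $G^{+}$ has at least $n/(ck^{6})$ removable vertices, greedily pick among them a set $R$ that is independent in $G^{+}$; since each removable vertex has $G^{+}$-degree at most $k$ we keep at least a $1/(k+1)$ fraction, and we return outcome~\eqref{ec3} with $G' = G^{+} \setminus R$. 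Otherwise, if $|M| \ge n/(ck^{6})$, we return outcome~\eqref{ec2} with $M$. Otherwise we return outcome~\eqref{ec1}.

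It remains to justify the three outcomes. Outcome~\eqref{ec2} is correct by definition. For outcome~\eqref{ec3}: $G'$ is a subgraph of $G^{+}$, so if $\tw(G) \le k$ then $\tw(G') \le \tw(G^{+}) = \tw(G)$; and given any tree decomposition $\mathcal{T}$ of $G'$ of width at most $k$, for each $v \in R$ the set $N_{G^{+}}(v)$ is a clique of $G^{+}$ contained in $V(G')$ (because $R$ is independent in $G^{+}$), hence contained in some bag of $\mathcal{T}$, to which we attach a fresh leaf bag $\{v\} \cup N_{G^{+}}(v)$; performing this for every $v \in R$ turns $\mathcal{T}$ into a tree decomposition of $G^{+}$, and hence of $G$, of width at most $k$, and the host bags can be located in $k^{O(1)} n$ total time. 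In particular, if $\tw(G) > k$ this forces $\tw(G') > k$, so the conditional guarantee is never violated, and the bound $|V(G')| \le n - n/O(k^{6})$ follows after the usual constant-chasing of~\cite{DBLP:journals/siamcomp/Bodlaender96}.

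The only genuinely hard point — and the one I expect to be the main obstacle — is the correctness of outcome~\eqref{ec1} in the last branch, i.e.\ Bodlaender's combinatorial lemma that if $\tw(G) \le k$ then the first or the second branch must fire. Its proof starts from the fact that $|E(G)| \le kn$ forces at least $n/2$ vertices of $G$-degree at most $4k$; if the maximal matching $M$ among those is small then the unmatched ones form a large set $I$ independent in $G$, and one must then argue — and this is where improvedness is essential — that $\Omega(n/k^{O(1)})$ vertices of $I$ are removable, exploiting that a non-removable low-degree vertex of $I$ identifies a non-edge of $G^{+}$ that only few vertices (at most $k$, its common neighbours) can identify, and that few such non-edges exist; the precise accounting, due to Bodlaender, is what yields the $n/O(k^{6})$ bounds. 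Making this accounting go through, and implementing the improved-graph construction with a genuinely linear dependence on $n$, are the two places where all the technical work lies; I would import both from~\cite{DBLP:journals/siamcomp/Bodlaender96} (cf.\ Lemma~2.7 of~\cite{DBLP:journals/siamcomp/BodlaenderDDFLP16}). Granting these, the overall running time is $k^{O(1)} n$, since the edge count, the improved-graph construction within its budget, the maximal matching, the choice of $R$, and the tree-decomposition lifting are each $k^{O(1)} n$.
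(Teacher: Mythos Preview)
The paper does not give its own proof of this proposition: it is stated as a black-box result imported from Bodlaender~\cite{DBLP:journals/siamcomp/Bodlaender96}, with the explicit remark that ``for more details on the technique we refer to~\cite{DBLP:journals/siamcomp/Bodlaender96}''. So there is no proof in the paper to compare your proposal against.

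That said, your sketch is a faithful outline of Bodlaender's original argument (improved graph, simplicial/removable vertices, maximal matching among low-degree vertices, and the dichotomy lemma). You correctly identify the two genuinely technical points --- the combinatorial lemma forcing one of the two branches to fire when $\tw(G)\le k$, and the linear-time construction of the improved graph within its edge budget --- and you explicitly say you would import both from~\cite{DBLP:journals/siamcomp/Bodlaender96}. Since that is exactly what the present paper does (it imports the whole proposition), your approach is consistent with the paper's treatment; you simply go one level deeper than the paper chooses to.
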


In particular, in our case \Cref{pro:bodl} is exploited in the following way (which is very similar to~\cite{DBLP:journals/siamcomp/BodlaenderDDFLP16}; we include a proof for the convenience of the reader).

\begin{lemma}
\label{lem:compression}
Suppose there is an algorithm $A$ that, given an $n$-vertex graph $G$, integer $k$, and a tree decomposition of $G$ of width at most $4k+3$, in time $f(k) n$ either outputs a tree decomposition of $G$ of width at most $2k+1$ or determines that the treewidth of $G$ is larger than $k$.
Then there is an algorithm that in time $f(k) k^{\OO(1)} n$ does the same, but without requiring a tree decomposition as an input.
\end{lemma}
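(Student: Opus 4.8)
The plan is to recursively shrink $G$ using Proposition~\ref{pro:bodl}, recording the reductions, and then to rebuild a tree decomposition of $G$ of width at most $2k+1$ by undoing the reductions from the smallest graph back to $G$, invoking the assumed algorithm $A$ at each contraction to pull the width back down. Concretely, I would define a recursive procedure $P(G,k)$. If $|V(G)|\le 2k+2$, return the single-bag tree decomposition, which has width at most $2k+1$ and is a valid output whatever $\tw(G)$ is. Otherwise call Proposition~\ref{pro:bodl} on $(G,k)$ in $k^{O(1)}|V(G)|$ time. In case~\ref{ec1}, return ``$\tw(G)>k$''. In case~\ref{ec2}, let $M$ be the returned matching, put $G_1=G/M$, and call $P(G_1,k)$; if this reports $\tw(G_1)>k$, return ``$\tw(G)>k$'' (valid, since $G_1$ is a minor of $G$, so $\tw(G)\ge\tw(G_1)>k$); otherwise it returns a tree decomposition $T_1$ of $G_1$ of width at most $2k+1$, and replacing in every bag of $T_1$ each contracted vertex by its two original endpoints gives a tree decomposition of $G$ with bags of size at most $2(2k+2)=4k+4$ (as $M$ is a matching), that is, of width at most $4k+3$; run $A$ on this decomposition (together with $G,k$) and return its output. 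In case~\ref{ec3}, let $G'$ be the returned graph and call $P(G',k)$; if this reports $\tw(G')>k$, return ``$\tw(G)>k$'' (valid, since $\tw(G)\le k$ would force $\tw(G')\le\tw(G)\le k$); otherwise lift the returned tree decomposition of $G'$ of width at most $2k+1$ to a tree decomposition of $G$ via the lifting of Proposition~\ref{pro:bodl} and return it. The required algorithm is $P(G,k)$, which uses only $G$ and $k$.

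For correctness, the invariant I would maintain is: if $\tw(G)\le k$, then every graph arising in the recursion has treewidth at most $k$ --- in case~\ref{ec2} because a contraction (a minor) cannot raise treewidth, and in case~\ref{ec3} because Proposition~\ref{pro:bodl} gives $\tw(G')\le\tw(G)$ when $\tw(G)\le k$. Hence if $\tw(G)\le k$, no call to Proposition~\ref{pro:bodl} or to $A$ along the recursion can correctly answer ``$\tw>k$'', so $P(G,k)$ returns an actual tree decomposition, and by construction it has width at most $2k+1$; in case~\ref{ec2} this last point is exactly the guarantee of $A$, which we invoke only on a genuine tree decomposition of width at most $4k+3$. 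Conversely, whenever $P$ answers ``$\tw(G)>k$'' it is because some graph in the recursion was certified (by Proposition~\ref{pro:bodl} or by $A$, both assumed correct) to have treewidth above $k$, which by the contrapositive of the invariant forces $\tw(G)>k$; and any decomposition $P$ returns has width at most $2k+1$ and hence is a valid output regardless of $\tw(G)$.

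For the running time, write $n_i$ for the number of vertices of the graph at recursion depth $i$. Each reduction step strictly decreases the vertex count --- a matching with at least $n/O(k^6)\ge 1$ edges removes at least one vertex, and $n-n/O(k^6)<n$ --- so the recursion terminates. Moreover $n_{i+1}\le(1-1/(ck^6))n_i$ for an absolute constant $c$ as long as $n_i>ck^6$, and at most $ck^6$ steps remain once $n_i\le ck^6$. The work at depth $i$ is $k^{O(1)}n_i$ for the call to Proposition~\ref{pro:bodl} and for the (un)lifting, plus $f(k)n_i$ if $A$ is called there. Summing the geometrically decreasing part over the depths with $n_i>ck^6$ gives at most $(f(k)+k^{O(1)})\cdot n\cdot\sum_{j\ge 0}(1-1/(ck^6))^j=(f(k)+k^{O(1)})\cdot O(k^6)\cdot n$, and the tail contributes a further $(f(k)+k^{O(1)})\cdot O(k^{12})$, which is absorbed since $n\ge 1$. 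Thus $P$ runs in $f(k)k^{O(1)}n$ time, as claimed.

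The step I expect to take the most care is case~\ref{ec3}. Proposition~\ref{pro:bodl} as stated only asserts that a tree decomposition of $G'$ of width at most $k$ lifts to a tree decomposition of $G$ of width at most $k$, whereas the recursion hands us one of width up to $2k+1$. I would argue this is not a real obstacle: the reduction underlying Proposition~\ref{pro:bodl} deletes a set of vertices each of whose neighborhood induces a clique of size at most $k$ in $G'$, and re-inserting such a vertex $v$ amounts to attaching a fresh bag $N_G(v)\cup\{v\}$, of size at most $k+1$, next to any existing bag that already contains the clique $N_G(v)$ --- such a bag exists because every clique of a graph is contained in a common bag of any of its tree decompositions. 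This never raises the width above $\max\{k,\text{current width}\}$, so a tree decomposition of $G'$ of width at most $2k+1$ does lift to one of $G$ of width at most $2k+1$. The remaining points --- the base case, propagating the answers correctly, and maintaining each auxiliary graph and tree decomposition in time linear in its current size --- are routine.
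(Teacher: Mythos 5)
Your overall recursive scheme is the same as the paper's (recurse through the reductions of Proposition~\ref{pro:bodl}, and in the matching case expand bags to get a width-$(4k+3)$ decomposition and invoke $A$), but there is a genuine gap exactly at the step you flagged, namely case~\ref{ec3}. You invoke Proposition~\ref{pro:bodl} with parameter $k$, so its stated guarantee only lets you lift a tree decomposition of $G'$ of width $\le k$ back to $G$, whereas your recursion hands you one of width up to $2k+1$. Your attempted patch is not sound: it relies on internal details of Bodlaender's reduction that are not part of the black-box statement, and the detail you assert is not what the reduction actually provides. The vertices removed in that case are simplicial in the \emph{improved} graph (which has extra edges between vertices with many common low-degree neighbors), not in $G'$ itself; the reason the lifting works there is that any tree decomposition of width at most $k$ is forced to place such vertex pairs in a common bag (a bag separating them would have to contain all their $\ge k+1$ common neighbors), and this forcing argument breaks down for decompositions of width $2k+1$. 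So ``$N_G(v)$ is a clique of $G'$, hence contained in some bag'' is not available, and a width-$(2k+1)$ decomposition of $G'$ need not lift to a width-$(2k+1)$ decomposition of $G$ by your argument.

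The paper avoids this entirely with a small but essential twist: it calls Proposition~\ref{pro:bodl} with parameters $(G,\,2k+1)$ rather than $(G,k)$. Then case~\ref{ec3} directly guarantees that any tree decomposition of $G'$ of width $\le 2k+1$ can be converted into one of $G$ of width $\le 2k+1$ in $k^{O(1)}n$ time, case~\ref{ec1} still permits answering ``$\tw(G)>k$'' (since $\tw(G)>2k+1>k$), and the invariant $\tw(G')\le\tw(G)\le k$ is preserved because $k\le 2k+1$. With that one change your construction, correctness argument, and running-time analysis go through essentially as in the paper; without it, case~\ref{ec3} of your proof does not stand on the statement of Proposition~\ref{pro:bodl}.
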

\begin{proof}
We use a recursive procedure, which is always called with parameters $G_r$ and $k$, where $k$ is the original input $k$ and $G_r$ is a graph whose treewidth is guaranteed to be at most $k$ if the treewidth of the input graph $G$ is at most $k$.
Each recursive call either determines that the treewidth of $G_r$ (and thus also the treewidth of $G$) is larger than $k$, or returns a tree decomposition of $G_r$ of width at most $2k+1$.

The base case of the recursion is a trivial edgeless graph with treewidth $0$.
In the start of each recursive call we use the algorithm of \Cref{pro:bodl} with parameters $(G_r, 2k+1)$.
In case~\ref{ec1}, we can return immediately.
In case~\ref{ec3}, we call the procedure recursively with the graph $G_r'$, and in case of a positive answer construct the tree decomposition of $G_r$ of width at most $2k+1$ and return it.
In case~\ref{ec2}, when the algorithm returns a matching $M \subseteq E(G)$, we contract the edges in $M$ to obtain a graph $G_r^M$ and call the algorithm recursively on $G_r^M$.
Contracting edges does not increase treewidth, so the treewidth of $G_r^M$ is at most the treewidth of $G_r$.
Also, we can obtain a tree decomposition $T$ of $G_r$ of width at most $4k+3$ from a tree decomposition of $G_r^M$ of width at most $2k+1$ by expanding the bags according to the matching, in particular, by replacing each occurrence of a vertex $w_{uv} \in V(G_r^M)$ corresponding to a contracted edge $uv \in M$ by the vertices $u$ and $v$.
Then we use the algorithm $A$ with $T$ to either get a tree decomposition of width at most $2k+1$ or to determine that the treewidth of $G_r$ is larger than $k$.

In each recursive call the number of vertices shrinks by a factor $1 - 1/\OO(k^6)$, and therefore the total time complexity is $T(n, k) = k^{\OO(1)} n + f(k)n + T(n - n/\OO(k^6), k)$, which can be bounded by $T(n, k) = f(k) k^{\OO(1)} n$.
\end{proof}

Therefore we will focus on giving the algorithm $A$ of \Cref{lem:compression}, i.e., on the problem of improving the width of a given tree decomposition from $4k+3$ to $2k+1$ or determining that the treewidth is larger than~$k$.

\section{Improving a tree decomposition}
\label{sec:impr}
In this section we describe the tree decomposition improvement operation and prove its main graph-theoretical properties.
Many ideas of this section are inspired by the work of Bellenbaum and Diestel~\cite{DBLP:journals/cpc/BellenbaumD02}, but none of our proofs is directly from therein.

\subsection{Splittable vertex sets}
Let $G$ be a graph.
We say that a vertex set $W \subseteq V(G)$ is \emph{splittable} if $V(G)$ can be partitioned into $(C_1, C_2, C_3, X)$ such that there are no edges between $C_i$ and $C_j$ for $i \neq j$ and $|(W \cap C_i) \cup X| < |W|$ holds for all $i \in \{1,2,3\}$.
We refer to such 4-tuple as a \emph{split} of $W$.

The next lemma shows that if a tree decomposition of a graph $G$ has width larger than $2k+1$, then either a largest bag of the tree decomposition is splittable or the treewidth of $G$ is larger than~$k$.

\begin{lemma}
\label{lem:part}
Let $G$ be a graph of treewidth $\le k$.
Any vertex set $W \subseteq V(G)$ of size $|W| \ge 2k+3$ is splittable.
\end{lemma}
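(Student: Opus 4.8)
The plan is to reduce the statement to Lemma~\ref{lem:balsep} together with an elementary balls-into-bins argument on the components cut off by a balanced separator. First I would apply Lemma~\ref{lem:balsep} to obtain a balanced separator $X$ of $W$ with $|X| \le k+1$. Since $|W| \ge 2k+3$, we get the strict inequality $|X| \le k+1 = \tfrac{2k+2}{2} < \tfrac{2k+3}{2} \le \tfrac{|W|}{2}$, which is the only place the ``$+3$'' in the hypothesis is used. Let $D_1, \dots, D_m$ be the vertex sets of the connected components of $G \setminus X$, and write $t_j = |W \cap D_j|$; by the choice of $X$ we have $t_j \le |W|/2$ for all $j$, and clearly $\sum_{j} t_j = |W \setminus X| \le |W|$.

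To build the split, let $D_1$ be a component maximizing $t_1 = |W \cap D_1|$, set $C_1 := D_1$, and distribute the remaining components $D_2, \dots, D_m$ one at a time (in an arbitrary order) into $C_2$ and $C_3$, each time adding the current component to whichever of $C_2, C_3$ currently has the smaller value of $|W \cap \cdot|$. Then $(C_1, C_2, C_3, X)$ is a partition of $V(G)$, and since $C_1, C_2, C_3$ are unions of distinct connected components of $G \setminus X$, there are no edges between $C_i$ and $C_j$ for $i \ne j$. Each $C_i$ is disjoint from $X$, so $|(W \cap C_i) \cup X| = |W \cap C_i| + |X|$, and because $|X| < |W|/2$ it suffices to show $|W \cap C_i| \le |W|/2$ for each $i \in \{1,2,3\}$.

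For $C_1$ this is immediate: $|W \cap C_1| = t_1 \le |W|/2$. For $C_2$ and $C_3$, let $M$ be the larger of $|W \cap C_2|$ and $|W \cap C_3|$, attained say by $C_b$ (if $C_b$ received no component, then $M = 0$ and we are done). Let $D_\ell$ be the last component placed into $C_b$; note $\ell \ne 1$, so $t_\ell \le t_1$ by maximality of $t_1$. At the moment $D_\ell$ was placed, $C_b$ had $W$-weight $M - t_\ell$ and was the lighter of the two bins, so the other of $C_2, C_3$ had $W$-weight at least $M - t_\ell$ at that moment, hence also at the end. Therefore $|W \cap C_2| + |W \cap C_3| \ge 2M - t_\ell$; combining this with $|W \cap C_2| + |W \cap C_3| = \sum_{j \ne 1} t_j \le |W| - t_1$ and with $t_\ell \le t_1$ yields $2M - t_\ell \le |W| - t_1$, i.e. $M \le \tfrac{|W| - t_1 + t_\ell}{2} \le \tfrac{|W|}{2}$, as required.

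The argument is essentially routine, and the only step needing a little care is the grouping: one must isolate a heaviest component in $C_1$ before greedily two-coloring the rest, so that the ``last item into the fullest bin'' is guaranteed to be no heavier than what was set aside; without that, the fullest bin could end up with $W$-weight close to $\tfrac34|W|$. The remaining bookkeeping — the disjointness $C_i \cap X = \emptyset$ and the strict inequality $|X| < |W|/2$ — is straightforward.
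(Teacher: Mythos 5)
Your proof is correct and follows essentially the same route as the paper: apply Lemma~\ref{lem:balsep} to get a separator $X$ with $|X| \le k+1$, group the components of $G \setminus X$ into three parts each meeting $W$ in at most $|W|/2$ vertices, and conclude via $|W|/2 + k + 1 < |W|$. The only difference is the grouping rule — the paper repeatedly merges the two components with smallest $W$-weight (the invariant holds because with at least four parts the two lightest together weigh at most $|W|/2$), whereas you set aside a heaviest component and greedily two-color the rest — and both arguments are valid.
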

\begin{proof}
By \Cref{lem:balsep}, there is a balanced separator $X$ of $W$ of size $|X| \le k+1$.
Now, each of the connected components $C_i$ of $G \setminus X$ has $|C_i \cap W| \le |W|/2$, but there may be more than three of these connected components.
We claim that these connected components $C_i$ can be merged into three sets $C_1$, $C_2$, and $C_3$ that form a partition $(C_1, C_2, C_3)$ of $V(G) \setminus X$, so that for each $i \in \{1,2,3\}$ it holds that $|C_i \cap W| \le |W|/2$.

This merging can be achieved by a following process.
We maintain a partition $(C_1, \ldots, C_t)$ of $V(G) \setminus X$.
If the partition has at least four parts, let $C_1$ and $C_2$ be the two parts with the smallest values of $|C_i \cap W|$.
We replace $C_1$ and $C_2$ by their union $C_1 \cup C_2$.
In this case it must hold that $|(C_1 \cup C_2) \cap W| \le |W|/2$, because there were at least four parts, and $C_1$ and $C_2$ were the two parts with the smallest values of $|C_i \cap W|$.

We end up with a partition $(C_1, C_2, C_3)$ of $V(G) \setminus X$, so that $|C_i \cap W| \le |W|/2$ for each $i$ and there are no edges between $C_i$ and $C_j$ for $i \neq j$.
The partition $(C_1, C_2, C_3, X)$ is indeed a split of $W$ because $|(W \cap C_i) \cup X| \le |W|/2 + k+1 < |W|$.
\end{proof}

In the algorithm, the set $W$ will always be a largest bag of the tree decomposition, i.e., a bag of size $|W| = w+1$, where $w$ is the width of the tree decomposition, and we will consider the tree decomposition as rooted at $W$.

We impose additional restrictions on the splits that we consider.
Let $W$ be root bag of a rooted tree decomposition $T$.
A split $(C_1, C_2, C_3, X)$ of $W$ is a \emph{minimum split} of $W$ if the split minimizes $|X|$ among all splits of $W$, and among splits minimizing $|X|$ the split minimizes $d_T(X) = \sum_{x \in X} d_T(x)$, where $d_T(x)$ is the distance from the home bag of $x$ to the root bag $W$ in $T$.
The improvement operation will be performed by using a minimum split of $W$.

\subsection{The improvement operation}
\label{sec:split}
We describe the construction of an improved tree decomposition by using a minimum split.
This construction is illustrated with an example in \Cref{fig:improvement}.

\begin{figure}[!t]
\begin{center}
\begin{tikzpicture}[scale=1, every node/.style={draw, rectangle, minimum size=13pt, text centered,inner sep=2.5pt,scale=1}]
\node[draw=none] at (0,1.4) {\large $T$};

\node (W) at (0,0.8) {$a,b,c,d,e$};
\node[draw=none] (Ww) at (-1.2,0.8) {$W$};

\node (B) at (-1.2,-0.2) {$a,b,f,g$};
\node[draw=none] (Bb) at (-2.2,-0.2) {$B_1$};

\node (C) at (-1.2,-1.2) {$a,f,g,h$};
\node[draw=none] (Cc) at (-2.2,-1.2) {$B_2$};

\node (D) at (1.2,-0.2) {$d,e,i,j$};
\node[draw=none] (Dd) at (0.25,-0.2) {$B_3$};

\node (E) at (1.2,-1.2) {$d,i,j,k$};
\node[draw=none] (Ee) at (0.25,-1.2) {$B_4$};

\path (W) edge (B);
\path (B) edge (C);
\path (W) edge (D);
\path (D) edge (E);

\node[draw=none] at (0,-2) {\large $\downarrow$};

\node (Xc) at (0,-2.8) {$h,i$};
\node[draw=none] at (-0.6,-2.7) {\large $X$};

\node[draw=none] at (-4,-6.6) {\large $T^1$};

\node (W1) at (-4,-4) {$a,c,h,i$};
\node[draw=none] (Ww1) at (-5,-4) {$W^1$};

\node (B1) at (-4.9,-5) {$a,h$};
\node[draw=none] (Bb1) at (-5.6,-5) {$B^1_1$};

\node (C1) at (-4.9,-6) {$a,h$};
\node[draw=none] (Cc1) at (-5.6,-6) {$B^1_2$};

\node (D1) at (-3.1,-5) {$i$};
\node[draw=none] (Dd1) at (-3.6,-5) {$B^1_3$};

\node (E1) at (-3.1,-6) {$i,k$};
\node[draw=none] (Ee1) at (-3.7,-6) {$B^1_4$};

\path (W1) edge (B1);
\path (B1) edge (C1);
\path (W1) edge (D1);
\path (D1) edge (E1);

\node[draw=none] at (0,-6.6) {\large $T^2$};

\node (W2) at (0,-4) {$b,h,i$};
\node[draw=none] (Ww2) at (-0.85,-4) {$W^2$};

\node (B2) at (-0.9,-5) {$b,g,h$};
\node[draw=none] (Bb2) at (-1.7,-5) {$B^2_1$};

\node (C2) at (-0.9,-6) {$g,h$};
\node[draw=none] (Cc2) at (-1.55,-6) {$B^2_2$};

\node (D2) at (0.9,-5) {$i$};
\node[draw=none] (Dd2) at (0.4,-5) {$B^2_3$};

\node (E2) at (0.9,-6) {$i$};
\node[draw=none] (Ee2) at (0.4,-6) {$B^2_4$};

\path (W2) edge (B2);
\path (B2) edge (C2);
\path (W2) edge (D2);
\path (D2) edge (E2);

\node[draw=none] at (4,-6.6) {\large $T^3$};

\node (W3) at (4,-4) {$d,e,h,i$};
\node[draw=none] (Ww3) at (3,-4.02) {$W^3$};

\node (B3) at (3.1,-5) {$f,h$};
\node[draw=none] (Bb3) at (2.4,-5) {$B^3_1$};

\node (C3) at (3.1,-6) {$f,h$};
\node[draw=none] (Cc3) at (2.4,-6) {$B^3_2$};

\node (D3) at (4.9,-5) {$d,e,i,j$};
\node[draw=none] (Dd3) at (3.95,-5) {$B^3_3$};

\node (E3) at (4.9,-6) {$d,i,j$};
\node[draw=none] (Ee3) at (4.1,-6) {$B^3_4$};

\path (W3) edge (B3);
\path (B3) edge (C3);
\path (W3) edge (D3);
\path (D3) edge (E3);

\path (W1) edge (Xc);
\path (W2) edge (Xc);
\path (W3) edge (Xc);
\end{tikzpicture}
\end{center}
\caption{Example of the improvement operation. A tree decomposition $T$ of a graph $G$ with $V(G) = \{a,b,c,d,e,f,g,h,i,j,k\}$ with root bag $W = \{a,b,c,d,e\}$ (top). For a minimum split $(C_1, C_2, C_3, X) = (\{a,c,k\}, \{b,g\}, \{d,e,f,j\}, \{h,i\})$ of $W$, the constructed improved tree decomposition (bottom). For the bag $W$ it holds that $W^X = \{h,i\}$, for the bag $B_1$ it holds that $B^X_1 = \{h\}$, and for the other bags $B_i$ it holds that $B^X_i = \emptyset$.\label{fig:improvement}}
\end{figure}

Let $T$ be a rooted tree decomposition of a graph $G$, $W$ the root bag of $T$, and $(C_1, C_2, C_3, X)$ a minimum split of $W$.
We first give a slightly informal description of the improvement operation and then a more formal description with additional notation.

For each $i \in \{1,2,3\}$, we obtain a tree decomposition $T^i$ of the induced subgraph $G[C_i \cup X]$ by removing all other vertices than $C_i \cup X$ from each bag of $T$, and then inserting each vertex $x \in X$ to all bags on the path from the root to the home bag of $x$ (excluding the home bag, which already contains $x$).
In particular, each $T^i$ will have a root bag $W^i = (W \cap C_i) \cup X$.
Then, the improved tree decomposition is obtained by combining $T^1$, $T^2$, and $T^3$ by connecting them from their root bags $W^1$, $W^2$, and $W^3$ to a new bag with vertex set $X$.

Next we define the construction of the improved tree decomposition more formally with the help of some additional notation.
First, for each bag $B$ of the tree decomposition $T$ we define the set of vertices in $X$ inserted to $B^i$ for each $i \in \{1,2,3\}$ to be $B^X = \{x \in X \setminus B \mid \text{the home bag of }x \text{ is a descendant of } B \text{ in } T\}$.
Then, we can define the tree decomposition $T^i$.

\begin{definition}[The tree decomposition $T^i$]
Let $T$ be a rooted tree decomposition with root bag $W$ and $(C_1, C_2, C_3, X)$ a minimum split of $W$.
For each $i \in \{1,2,3\}$, the rooted tree decomposition $T^i$ is obtained by replacing each bag $B$ of $T$ by a bag $B^i = (B \cap (C_i \cup X)) \cup B^X$.
\end{definition}

In other words, for each bag $B$ of $T$, each tree decomposition $T^i$ contains a bag $B^i$ corresponding to $B$ that is obtained by first removing all vertices not in $(C_i \cup X)$ and then inserting the set $B^X$.
The insertions of $B^X$ can be seen as first adding $X$ to the root bag $W^i$, and then ``fixing'' the connectedness condition by adding vertices $x \in X$ to bags in a minimal way.
In particular, they ensure that if $P^i$ is a parent bag of a bag $B^i$ in $T^i$, then $B^i \cap X \subseteq P^i \cap X$.
Therefore it follows that $T^i$ is a tree decomposition of the induced subgraph $G[C_i \cup X]$.

The \emph{improved tree decomposition} of $T$ with respect to $W$ and $(C_1, C_2, C_3, X)$ is then obtained by taking the disjoint union of $T^1$, $T^2$, and $T^3$ and connecting each of them from their roots $W^1$, $W^2$, and $W^3$ to a new bag with a vertex set $X$.

\begin{observation}
The improved tree decomposition is a tree decomposition of $G$.
\end{observation}
\begin{proof}
As argued above, for each $i \in \{1,2,3\}$, the tree decomposition $T^i$ is a tree decomposition of the graph $G[C_i \cup X]$.
As $X$ pairwise separates the vertex sets $C_i$ from each other, it follows that each edge and vertex of $G$ is in some of the induced subgraphs $G[C_i \cup X]$, and therefore as $T^i$ is a tree decomposition of $G[C_i \cup X]$, the improved tree decomposition satisfies the conditions~\ref{tddef:vertcond} and~\ref{tddef:edgecond} of tree decompositions.
The connectedness condition for vertices not in $X$ follows from the fact that each $T^i$ satisfies the connectedness condition and that each vertex not in $X$ appears in exactly one $T^i$.
For vertices in $X$, the connectedness condition is satisfied because by construction it is satisfied for each $T^i$ and it holds that $X \subseteq W^i$.
\end{proof}

\begin{figure}[!t]
\begin{center}
\begin{tikzpicture}[scale=0.7, every node/.style={draw, circle, minimum size=15pt, text centered,inner sep=0pt,scale=1}]
\draw[rounded corners=2mm, line width=0.2mm] (-1.5,2)--(1.5,2)--(1.5,-2)--(-1.5,-2)--cycle;

\draw[rounded corners=2mm, line width=0.2mm, fill=gray, fill opacity=0.2] (-1.5,2)--(1.5,2)--(1.5,1)--(-1.5,1)--cycle;

\draw[rounded corners=2mm, line width=0.2mm, fill=gray, fill opacity=0.2] (-1.5,0)--(1.5,0)--(1.5,-0.8)--(-1.5,-0.8)--cycle;

\draw[rounded corners=2mm, line width=0.2mm, fill=darkgray, fill opacity=0.3] (-1,2)--(-1,-2)--(1,-2)--(1,2)--(0.5,2)--(0.5,-1.4)--(-0.5,-1.4)--(-0.5,2)--cycle;

\node[draw=none] (BX) at (0,-1.7) {\large$B^X$};

\node[draw=none] (c1) at (-1.2,2.4) {\large$C_1$};
\node[draw=none] (c2) at (0,2.4) {\large$C_3$};
\node[draw=none] (c3) at (1.2,2.4) {\large$C_2$};
\node[draw=none] (bb) at (1.9,-0.4) {\large$B$};
\node[draw=none] (ww) at (1.9,1.5) {\large$W$};
\node[draw=none] (xx) at (0,-2.3) {\large$X$};

\node[draw=none] (ar) at (3.1,0) {\large$\rightarrow$};

\draw[rounded corners=2mm, line width=0.2mm] (4.5,2)--(7.5,2)--(7.5,-2)--(4.5,-2)--cycle;

\draw[rounded corners=2mm, line width=0.2mm, fill=gray, fill opacity=0.2] (4.5,2)--(7.5,2)--(7.5,1)--(4.5,1)--cycle;

\draw[rounded corners=2mm, line width=0.2mm, fill=gray, fill opacity=0.2] (4.5,0)--(7.5,0)--(7.5,-0.8)--(4.5,-0.8)--cycle;

\draw[rounded corners=2mm, line width=0.2mm, fill=darkgray, fill opacity=0.3] (5,2)--(5,0)--(4.5,0)--(4.5,-0.8)--(5.5,-0.8)--(5.5,2)--cycle;

\draw[rounded corners=2mm, line width=0.2mm, fill=darkgray, fill opacity=0.3] (7,2)--(7,0)--(7.5,0)--(7.5,-0.8)--(6.5,-0.8)--(6.5,2)--cycle;

\node[draw=none] (c12) at (4.8,2.4) {\large$C'_1$};
\node[draw=none] (c22) at (6,2.4) {\large$C'_3$};
\node[draw=none] (c32) at (7.2,2.4) {\large$C'_2$};
\node[draw=none] (bb2) at (7.9,-0.4) {\large$B$};
\node[draw=none] (ww2) at (7.9,1.5) {\large$W$};
\node[draw=none] (xx2) at (6,-2.3) {\large$X'$};
\end{tikzpicture}
\end{center}
\caption{Constructing a split $(C'_1, C'_2, C'_3, X')$ of $W$ from a split $(C_1, C_2, C_3, X)$ of $W$ in the proof of \Cref{lem:main}. The light gray illustrates the bags $W$ and $B$ and the dark gray the set $X$. The set $B^X$ is the part of $X$ that is below the bag $B$.\label{fig:mainlemma}}
\end{figure}

Next we prove the main lemma for arguing that that the improved tree decomposition is indeed improved.
The structure of the proof is to assume otherwise and then construct a split $(C'_1, C'_2, C'_3, X')$ that would contradict the fact that $(C_1, C_2, C_3, X)$ is a minimum split.
This argument is illustrated in \Cref{fig:mainlemma}.

\begin{lemma}
\label{lem:main}
Let $T$ be a rooted tree decomposition, $W$ the root bag of $T$, and $(C_1, C_2, C_3, X)$ a minimum split of $W$.
For each bag $B$ of $T$ and any pair of distinct $i,j \in \{1,2,3\}$ it holds that either $|B^X| < |B \cap (C_i \cup C_j)|$ or $B^X = \emptyset$.
\end{lemma}
\begin{proof}
By symmetry, we assume without loss of generality that $i=1,j=2$.
Suppose that $|B^X| \ge |B \cap (C_1 \cup C_2)|$ and $B^X$ is non-empty.
We claim that there is a split $(C'_1, C'_2, C'_3, X')$ of $W$ with $X' = (X \setminus B^X) \cup (B \cap (C_1 \cup C_2))$.
This split would contradict the minimality of the original split because $|X'| \le |X|$ and the home bags of vertices in $B^X$ are strict descendants of $B$ and thus strict descendants of the home bags of vertices in $B \cap (C_1 \cup C_2)$, implying that $d_T(y) < d_T(x)$ for all $y \in B \cap (C_1 \cup C_2)$ and $x \in B^X$.

To show that there is indeed such a split $(C'_1, C'_2, C'_3, X')$, first note that $B^X$ does not intersect $W$ because $B$ separates $B^X$ from $W$ and $B \cap B^X = \emptyset$, so $W \cap X \subseteq W \cap X'$.
Next we prove that the vertex sets $(W \cap C_1) \setminus X'$, $(W \cap C_2) \setminus X'$, and $(W \cap C_3) \setminus X'$ are in different connected components of $G \setminus X'$.
This implies that we can partition $V(G) \setminus X'$ to $(C'_1, C'_2, C'_3)$ so that $W \cap C'_i = (W \cap C_i) \setminus X'$ and there are no edges between $C'_i$ and $C'_j$ for $i \neq j$, implying that $(C'_1, C'_2, C'_3, X')$ is a split.

Suppose that there is a path from $(W \cap C_k) \setminus X'$ to $(W \cap C_l) \setminus X'$, with $k \neq l$, in $G \setminus X'$, and by symmetry assume that $k \in \{1, 2\}$.
The path must intersect $B^X$ before intersecting other vertices of $V(G) \setminus C_k$ because $X' \cup B^X \supseteq X$ separates $C_k$ from $V(G) \setminus C_k$.
Therefore we have a path from $W \cap C_k$ to $B^X$ that is contained in $(C_k \cup B^X) \setminus X'$.
This path must have a vertex in $B$ because $B$ separates $W$ from $B^X$.
However $B \cap (C_k \cup B^X) = B \cap C_k \subseteq X'$, so this path cannot have a vertex in $B$.
\end{proof}

Because $B^i = (B \setminus (C_j \cup C_k)) \cup B^X$ where $i,j,k$ is a permutation of $1,2,3$, \Cref{lem:main} implies that $|B^i| \le |B|$ for all $B^i$, and that $|B^i| < |B|$ if $B^X$ is non-empty.
This shows that the width of the improved tree decomposition is at most the width of $T$.
Moreover, the only case when $|B^i| = |B|$ can hold is when $B \subseteq C_i \cup X$, in which case it holds that $B^i = B$ and $B^j = B \cap X$ for $j \neq i$.
Together with the fact that $|W^i| < |W|$ and $|X| < |W|$ by the definition of a split, this implies that the number of bags of size $|W|$ in the improved tree decomposition is smaller than the number of bags of size $|W|$ in $T$ if $W$ is a largest bag of $T$.

\section{Amortized local improvement}
\label{sec:local}
A direct implementation of the improvement operation of the previous section would have time complexity $\Omega(n)$, which would result in $\Omega(n^2)$ time complexity over $n$ improvements.
In this section we introduce the \emph{pruned improvement operation} that is a slightly changed version of the improvement operation of the previous section.
We show that the pruned improvement operation can be implemented so that the number of bags edited over the course of the algorithm is bounded by $2^{\OO(k)} n$, and moreover that in each pruned improvement operation the bags edited form a subtree containing the root bag.

The main idea behind the pruned improvement operation is to exploit the fact that, as was discussed in the end of the previous section, $|B^i| = |B|$ can hold only in the case when $B \subseteq C_i \cup X$.
In this case, the whole subtree rooted at $B$ will be handled in constant time by directly copying it.
In the other case, when $|B^i| < |B|$ for all $i \in \{1,2,3\}$, the work will be charged from a potential function that is initially bounded by $2^{\OO(k)} n$.

\subsection{Pruned improvement operation}
We define the pruned improvement operation which will be used instead of the improvement operation of \Cref{sec:impr}.
The pruned improvement operation is illustrated with an example in \Cref{fig:improvementpruned}.

Let $T$ be a rooted tree decomposition with root bag $W$ and $(C_1, C_2, C_3, X)$ a minimum split of $W$.
We say that a bag $B$ of $T$ is \emph{editable} if $B$ intersects at least two of $C_1$, $C_2$, $C_3$ and every ancestor of $B$ is editable.
Note that by the definition of a split the root bag $W$ always intersects at least two of $C_1$, $C_2$, $C_3$, and therefore is always editable.

Observe that a bag $B$ that is not editable has a unique highest ancestor bag $A$ (which may be $B$ itself) for which it holds that $A \subseteq C_i \cup X$ for some $i \in \{1,2,3\}$.
In this case we say that $B$ is \emph{covered} by $C_i$ (or just covered without specifying $C_i$).
When $A \subseteq X$, we define that $B$ is covered by $C_1$, but not by $C_2$ or $C_3$, implying that every bag that is not editable is covered by exactly one $C_i$.
Observe that by definition, if $B$ is covered by $C_i$ then also all of its descendants are covered by $C_i$.
In particular, $T$ can be partitioned into a subtree of editable bags containing the root, and multiple rooted subtrees, each of which has a root bag $B \subseteq C_i \cup X$ for some $i \in \{1,2,3\}$ and whose all bags are covered by $C_i$.

We also make the following observation.

\begin{observation}
\label{obs:covbx}
If a bag $B$ is covered, then $B^X = \emptyset$.
\end{observation}
\begin{proof}
The bag $B$ has an ancestor bag $A$ for which it holds that $A \subseteq C_i \cup X$ for some $i \in \{1,2,3\}$.
Now, as $|A \cap (C_j \cup C_k)| = 0$, where $i,j,k$ is a permutation of $1,2,3$, by \Cref{lem:main} it holds that $A^X = \emptyset$.
By the definition of $B^X$, it holds that $B^X \subseteq A^X$ whenever $A$ is an ancestor of $B$.
\end{proof}

Next we define the tree decomposition $T^i$ in the pruned improvement operation.

\begin{definition}[Pruned $T^i$]
\label{def:prunedti}
Let $T$ be a rooted tree decomposition with root bag $W$ and let $(C_1, C_2, C_3, X)$ be a minimum split of $W$.
For each $i \in \{1,2,3\}$, the pruned $T^i$ is obtained by replacing each bag $B$ of $T$ by
\begin{enumerate}
\item a bag $B^i = (B \cap (C_i \cup X)) \cup B^X$ if $B$ is editable,
\item a bag $B^i = B$ if $B$ is covered by $C_i$,
\item nothing if $B$ is covered by $C_j$ for $j \neq i$.
\end{enumerate}
\end{definition}

\begin{figure}[!t]
\begin{center}
\begin{tikzpicture}[scale=1, every node/.style={draw, rectangle, minimum size=13pt, text centered,inner sep=2.5pt,scale=1}]
\node[draw=none] at (0,1.4) {\large $T$};

\node (W) at (0,0.8) {$a,b,c,d,e$};
\node[draw=none] (Ww) at (-1.2,0.8) {$W$};

\node (B) at (-1.2,-0.2) {$a,b,f,g$};
\node[draw=none] (Bb) at (-2.2,-0.2) {$B_1$};

\node (C) at (-1.2,-1.2) {$a,f,g,h$};
\node[draw=none] (Cc) at (-2.2,-1.2) {$B_2$};

\node (D) at (1.2,-0.2) {$d,e,i,j$};
\node[draw=none] (Dd) at (0.25,-0.2) {$B_3$};

\node (E) at (1.2,-1.2) {$d,i,j,k$};
\node[draw=none] (Ee) at (0.25,-1.2) {$B_4$};

\path (W) edge (B);
\path (B) edge (C);
\path (W) edge (D);
\path (D) edge (E);

\node[draw=none] at (0,-2) {\large $\downarrow$};

\node (Xc) at (0,-2.8) {$h,i$};
\node[draw=none] at (-0.6,-2.7) {\large $X$};

\node[draw=none] at (-4,-6.6) {\large $T^1$};

\node (W1) at (-4,-4) {$a,c,h,i$};
\node[draw=none] (Ww1) at (-5,-4) {$W^1$};

\node (B1) at (-4.9,-5) {$a,h$};
\node[draw=none] (Bb1) at (-5.6,-5) {$B^1_1$};

\node (C1) at (-4.9,-6) {$a,h$};
\node[draw=none] (Cc1) at (-5.6,-6) {$B^1_2$};

\path (W1) edge (B1);
\path (B1) edge (C1);

\node[draw=none] at (0,-6.6) {\large $T^2$};

\node (W2) at (0,-4) {$b,h,i$};
\node[draw=none] (Ww2) at (-0.81,-4) {$W^2$};

\node (B2) at (-0.9,-5) {$b,g,h$};
\node[draw=none] (Bb2) at (-1.72,-5) {$B^2_1$};

\node (C2) at (-0.9,-6) {$g,h$};
\node[draw=none] (Cc2) at (-1.55,-6) {$B^2_2$};

\path (W2) edge (B2);
\path (B2) edge (C2);

\node[draw=none] at (4,-6.6) {\large $T^3$};

\node (W3) at (4,-4) {$d,e,h,i$};
\node[draw=none] (Ww3) at (2.98,-4.02) {$W^3$};

\node (B3) at (3.1,-5) {$f,h$};
\node[draw=none] (Bb3) at (2.4,-5) {$B^3_1$};

\node (C3) at (3.1,-6) {$f,h$};
\node[draw=none] (Cc3) at (2.4,-6) {$B^3_2$};

\node (D3) at (4.9,-5) {$d,e,i,j$};
\node[draw=none] (Dd3) at (3.95,-5) {$B^3_3$};

\node (E3) at (4.9,-6) {$d,i,j,k$};
\node[draw=none] (Ee3) at (3.95,-6) {$B^3_4$};

\path (W3) edge (B3);
\path (B3) edge (C3);
\path (W3) edge (D3);
\path (D3) edge (E3);

\path (W1) edge (Xc);
\path (W2) edge (Xc);
\path (W3) edge (Xc);
\end{tikzpicture}
\end{center}
\caption{Example of the pruned improvement operation. A tree decomposition $T$ of a graph $G$ with $V(G) = \{a,b,c,d,e,f,g,h,i,j,k\}$ with root bag $W = \{a,b,c,d,e\}$ (top). For a minimum split $(C_1, C_2, C_3, X) = (\{a,c,k\}, \{b,g\}, \{d,e,f,j\}, \{h,i\})$ of $W$, the constructed pruned improved tree decomposition (bottom). The bags $W$, $B_1$, and $B_2$ are editable, and the bags $B_3$ and $B_4$ are covered by $C_3$. Note that even though the vertex $k$ is in $C_1$, it occurs in pruned $T^3$ instead of pruned $T^1$ because the bags containing $k$ are covered by $C_3$.\label{fig:improvementpruned}}
\end{figure}

For editable bags, the construction of pruned $T^i$ is the same as the original construction of $T^i$.
For a bag $B$ that is covered by $C_i$, a copy $B^i = B$ is created to the decomposition $T^i$, but no bags $B^j$ to $T^j$ for $j \neq i$ are created.
In particular, one may think of the construction of pruned $T^i$ as first creating the original construction for the editable bags, and then for each bag $B$ that is covered by $C_i$ and whose parent bag $P$ is editable copying the subtree rooted at $B$ to $T^i$, attaching it as a child of $P^i$.

Next we show that pruned $T^i$ can be used in the improvement operation instead of the original~$T^i$.

\begin{lemma}
Let $T$ be a rooted tree decomposition, $W$ the root bag of $T$, and $(C_1, C_2, C_3, X)$ a minimum split of $W$.
The tree decomposition constructed by connecting pruned $T^1$, $T^2$, $T^3$ from their roots $W^1$, $W^2$, $W^3$ to a new bag $X$ is a tree decomposition of $G$.
\end{lemma}
\begin{proof}
First, note that for every bag $B$ of $T$, either the same bag $B$ appears in the construction or the bags $B^i = (B \cap (C_i \cup X)) \cup B^X$ for each $i \in \{1,2,3\}$ appear in the construction.
Therefore, as every vertex and edge of $G$ is in some induced subgraph $G[C_i \cup X]$ for $i \in \{1,2,3\}$, the constructed tree decomposition satisfies the conditions~\ref{tddef:vertcond} and~\ref{tddef:edgecond} of tree decompositions.

For the connectedness condition for a vertex $v \in C_i$, there are two cases.
First, if $v$ does not appear in any editable bag, then $v$ must be completely contained in the bags of a rooted subtree covered by some $C_j$, and therefore because this subtree is directly copied to pruned $T^j$, the connectedness condition is maintained.
Second, if $v$ appears in some editable bag, $v$ will appear only in pruned $T^i$.
This is because now, if there is a covered bag $B$ containing $v$, it must hold that it is covered by $C_i$, because otherwise the highest covered ancestor $A$ of $B$ would not contain $v$ but separate $B$ from the editable bags.
Therefore for each bag $B$ of the subtree containing $v$ in $T$, there will be a bag $B^i$ containing $v$ in $T^i$, and therefore the connectedness condition is satisfied for $v$.

Finally, we argue that the connectedness condition holds for each vertex $x \in X$.
To this end, we first observe that because the root bag $W$ is editable, it holds that $X \subseteq W_i$ for every $i \in \{1,2,3\}$.
Second, we show that if $x$ occurs in a non-root bag $B^i$ of pruned $T^i$, then $x$ occurs also in the parent bag $P^i$ of $B^i$ in pruned $T^i$.
If the parent bag $P$ of $B$ is editable, we have that if $x \in B^i$, then either $x \in P$ or $x \in P^X$ and thus $x \in P^i$.
If both $B$ and its parent $P$ are covered by $C_i$, we have that if $x \in B$, then $x \in P$, because $P^X = \emptyset$ by \Cref{obs:covbx}, implying that if $x \in B^i$ then~$x \in P^i$.
\end{proof}

The pruned improvement operation will be implemented by only editing the tree decomposition for the editable bags, and directly copying the covered rooted subtrees in constant time by just changing pointers.
In \Cref{sec:opt} we will argue that with the help of appropriate data structures, the pruned improvement operation can be implemented in time $2^{\OO(k)} t$, where $t$ is the number of editable bags.
In order to do this, one remaining thing to require in the improvement operation is to maintain a maximum degree 3 of the tree decomposition.
Next we give the final definition of our improvement operation that maintains maximum degree 3 by duplicating each bag $W^i$ if necessary.

\begin{definition}[Pruned improved tree decomposition]
\label{def:prunedtd}
Let $T$ be a rooted tree decomposition of maximum degree 3 with root bag $W$ and $(C_1, C_2, C_3, X)$ a minimum split of $W$.
The pruned improved tree decomposition of $T$ with respect to $W$ and $(C_1, C_2, C_3, X)$ is constructed by first constructing pruned $T^1$, $T^2$, $T^3$, then for each $i \in \{1,2,3\}$ if $W^i$ has three children $B_1^i$, $B_2^i$, and $B_3^i$, adding a new bag $W^{i'} = W^i$ connected to $W^i$, $B_1^i$, and $B_2^i$ and removing the edges between $W^i$ and $B_1^i$, $B_2^i$, and then combining $T^1$, $T^2$, and $T^3$ by connecting each $W^i$ to a new bag containing~$X$.
\end{definition}

The construction of the pruned improved tree decomposition maintains maximum degree 3 because each pruned $T^i$ has the same maximum degree as $T$, and splitting the bag $W^i$ into $W^i$ and $W^{i'}$ ensures that the degree of $W^i$ will still be 3 after connecting it to the new bag $X$.

\subsection{Amortization}
We show that the total number of editable bags over the course of a sequence of pruned improvement operations is bounded by $2^{\OO(k)} n$.
Here we use the property that the bag $W$ will always be a largest bag of $T$, i.e., the width of $T$ is assumed to be $|W|-1$.
For the amortization, we define a potential function on a tree decomposition $T$.

\begin{definition}
Let $T$ be a tree decomposition, $w$ an integer, and $B$ a bag of $T$. The $w$-potential of $B$ is
\[
\phi_w(B) = \left\{
\begin{array}{ll}
|B| \cdot 3^{|B|}, & \text{ if } |B| \le w \text{ and }\\
3|B| \cdot 3^{|B|}, & \text{ if } |B| > w.\\
\end{array}
\right.
\]
The $w$-potential of $T$ is $\phi_w(T) = \sum_{i \in V(T)} \phi_w(B_i)$.
\end{definition}

The $w$-potential of a tree decomposition $T$ of width $k$ is bounded by $\OO(3^k k |V(T)|)$.
Next we show that a pruned improvement operation on a largest bag $W$ of size $|W|=w+1$ decreases the $w$-potential by at least the number of editable bags.

\begin{lemma}
\label{lem:local}
Let $T$ be a degree-3 rooted tree decomposition of width $w$, $W$ the root bag of $T$ with $|W| = w+1$, and $(C_1, C_2, C_3, X)$ a minimum split of $W$.
If $T'$ is the pruned improved tree decomposition of $T$ with respect to $W$ and $(C_1, C_2, C_3, X)$, and $t$ is the number of editable bags, then $\phi_w(T') \le \phi_w(T) - t$.
\end{lemma}
\begin{proof}
The tree decomposition $T'$ will have four types of bags: bags $B^i$ corresponding to covered bags $B$ of $T$, bags $B^i$ corresponding to editable non-root bags of $T$, bags $W^i$ and $W^{i'}$ corresponding to the root bag $W$ of $T$, and the bag $X$.

Let $\mathcal{E}$ be the set of editable bags of $T$, excluding the root bag $W$.
Let $\mathcal{E'}$ be the set of bags in $T'$ corresponding to the bags $\mathcal{E}$ of $T$, i.e., $\mathcal{E'} = \{B^i \mid B \in \mathcal{E} \text{ and } i \in \{1,2,3\}\}$.
Define $\phi_w(\mathcal{E}) = \sum_{B \in \mathcal{E}} \phi_w(B)$ and $\phi_w(\mathcal{E'}) = \sum_{B' \in \mathcal{E'}} \phi_w(B')$.
As the contribution of covered bags is the same for $\phi_w(T')$ and $\phi_w(T)$, we get that
\[\phi_w(T') \le \phi_w(T) - \phi_w(W) + \phi_w(X) + \sum_{i \in \{1,2,3\}} \left(\phi_w(W^i) + \phi_w(W^{i'})\right) + \phi_w(\mathcal{E'}) - \phi_w(\mathcal{E}).\]

Let us start by bounding $\phi_w(\mathcal{E'}) - \phi_w(\mathcal{E})$.
By applying \Cref{lem:main} and the fact that each editable bag $B$ intersects $C_i$ for at least two different $i \in \{1,2,3\}$, we get that for every $B \in \mathcal{E}$ it holds that
\[|B^i| = |B| - |B \cap (C_j \cup C_k)| + |B^X| < |B|,\] where $i,j,k$ is a permutation of $1,2,3$.
Therefore, by $|B| \le w+1$ we get that 
\[\sum_{i \in \{1,2,3\}} \phi_w(B^i) \le 3 (|B|-1) \cdot 3^{|B|-1} \le (|B|-1) \cdot 3^{|B|} \le \phi_w(B)-1,\]
which implies that $\phi_w(\mathcal{E}) - \phi_w(\mathcal{E'}) \ge |\mathcal{E}|$, implying that 
\[\phi_w(T') \le \phi_w(T) - \phi_w(W) + \phi_w(X) + \sum_{i \in \{1,2,3\}} \left(\phi_w(W^i) + \phi_w(W^{i'})\right) - |\mathcal{E}|.\]

For bounding the potential of the bags $W^i$, $W^{i'}$, and $X$, first we observe that the definition of a split implies $|X| \le |W^{i'}| = |W^i| < |W|$.
In particular, it holds that $|X| \le |W^{i'}| = |W^i| \le w$.
As $|W| = w+1$, it holds that $\phi_w(W) \ge 9 \cdot \phi_w(W^i)$, and therefore 
\[\phi_w(W) \ge 1 + \phi_w(X) + \sum_{i \in \{1,2,3\}} \left(\phi_w(W^i) + \phi_w(W^{i'})\right),\]
which implies that
\[\phi_w(T') \le \phi_w(T) - 1 - |\mathcal{E}|,\]
which implies the conclusion, as the number of editable bags is $1 + |\mathcal{E}|$.
\end{proof}

By \Cref{lem:local}, the total number of editable bags across all operations when improving a tree decomposition of width $w$ by using pruned improvement operations on largest bags is bounded by $\phi_w(T) = 2^{\OO(w)} n$.

\section{Implementation in linear time}
\label{sec:opt}
In this section we show that our algorithm can be implemented in $2^{\OO(k)} n$ time.
We give a data structure that allows implementing the pruned improvement operation of \Cref{sec:local} in $2^{\OO(k)} t$ time, where $t$ is the number of editable bags, and in particular allows walking over the tree decomposition to perform the operation to all largest bags in a total of $2^{\OO(k)} n$ time.

\subsection{Overview}
We treat our algorithm in the form that the input consists of a graph $G$, an integer $k$, and a degree-3 tree decomposition $T$ of $G$ of width $w$, where $2k+2 \le w \le 4k+3$.
The algorithm either outputs a tree decomposition of width at most $w-1$, or concludes that the treewidth of $G$ is larger than $k$.
It is easy to see that $\OO(k)$ applications of this algorithm gives the algorithm $A$ of \Cref{lem:compression} and therefore also the algorithm of \Cref{the:main} up to a factor of $k^{\OO(1)}$ in the time complexity.

We note that given a tree decomposition $T$ of width $w$, we can obtain a tree decomposition of maximum degree 3, width $w$, and $\OO(n)$ bags in $w^{\OO(1)} (n + |T|)$ time by standard techniques~\cite{DBLP:books/sp/Kloks94}, so we will assume that the input tree decomposition $T$ has this form.

During the algorithm we maintain a degree-3 tree decomposition $T$ and a pointer to a node $r$ of $T$.
We treat $T$ as rooted at the node $r$ and denote by $W$ the bag of $r$.
We implement a data structure that supports the following queries:

\begin{enumerate}
\item Init($T$, $r$) -- Initializes the data structure with a degree-3 tree decomposition $T$ of width $w$ and a root node $r \in V(T)$ in time $2^{\OO(w)} n$.
\item Move($s$) -- Moves the pointer from $r$ to an adjacent node $s$ in time $2^{\OO(w)}$.
\item Split() -- Returns $\bot$ if the bag $W$ of $r$ is not splittable, otherwise sets the internal state of the data structure to represent a minimum split $(C_1, C_2, C_3, X)$ of $W$ and returns $\top$. Has time complexity $2^{\OO(w)}$.
\item State() -- Assuming there has been a successful Split query after the previous Init or Edit query, returns the intersection of the bag $W$ of $r$ and the minimum split $(C_1, C_2, C_3, X)$ represented by the internal state, that is, the partition $(C_1 \cap W, C_2 \cap W, C_3 \cap W, X \cap W)$ of $W$. Has time complexity $w^{\OO(1)}$.
\item Edit($T_1$, $T_2$, $p$, $r'$) -- Replaces a subtree $T_1$ of $T$ with a new subtree $T_2$, where $r \in V(T_1)$, $r' \in V(T_2)$, and $p$ is a function from the nodes of $T \setminus T_1$ whose parents are in $T_1$ to the nodes of $T_2$, specifying how $T \setminus T_1$ will be connected to $T_2$.
The pointer $r$ will be set to $r'$.
Has time complexity $2^{\OO(w)} (|T_1| + |T_2|)$.
Assumes the new $T$ to have degree-3 and width at most $w$.
\end{enumerate}

We give a detailed description of the data structure in the next subsection.
Then, in \Cref{subsec:mainalg} we give our algorithm, using the data structure.
In \Cref{subsec:finegrained} we give a more fine-grained bound for the $2^{\OO(k)}$ factor in the time complexity of the algorithm.

\subsection{The data structure}
\label{sec:dsint}
We now describe the details of the data structure.
The data structure is essentially a dynamic programming table on the underlying tree decomposition $T$, directed towards the root node $r$.
The main idea of the Move($s$) query is that moving the root $r$ to an adjacent node $s$ changes the dynamic programming tables of only the nodes $r$ and $s$, and therefore only their tables should be recomputed.
For a split query an essential idea is that while the properties of a split depend on the intersection of the root bag $W$ with the partition $(C_1, C_2, C_3, X)$, the set $W$ does not need to be ``globally specified'' to the dynamic programming because the set $W$ will also correspond to the root node of the dynamic programming.
The state queries are implemented by tracing the solution backwards in the dynamic programming, and the edit query by removing the old subtree and computing the dynamic programming tables for the new subtree in a bottom-up manner.

The dynamic programming used will be a quite standard application of dynamic programming on tree decompositions for vertex partitioning problems.
All of the $2^{\OO(w)}$ factors in the running times of the data structure operations are of form $4^w w^{\OO(1)}$, in particular the exponential factor $4^w$ arising from the number of ways a bag of size at most $w+1$ can intersect a partition $(C_1, C_2, C_3, X)$ of $V(G)$.

\subsubsection{Stored information}
Let $i$ be a node of $T$ with a bag $B_i$, and $G[T_i]$ be the subgraph of $G$ induced by vertices in the bags of the rooted subtree of $T$ rooted at the node $i$.
For each partition $(C_1 \cap B_i, C_2 \cap B_i, C_3 \cap B_i, X \cap B_i)$ of $B_i$ and integer $0 \le h \le w$ we have a table entry $U[i][(C_1 \cap B_i, C_2 \cap B_i, C_3 \cap B_i, X \cap B_i)][h]$.
This table entry stores $\bot$ if there is no partition $(C_1, C_2, C_3, X)$ of $V(G[T_i])$ such that $|X| = h$ and there are no edges between $C_1$, $C_2$, $C_3$.
If there is such a partition, then the minimum possible integer $d_{T_i}(X)$ over all such partitions is stored, defined as $d_{T_i}(X) = \sum_{x \in X} d_{T_i}(x)$, where $d_{T_i}(x) = 0$ if $x \in B_i$ and otherwise $d_{T_i}(x)$ is the distance in $T$ between the bag $B_i$ and the closest descendant bag of $B_i$ that contains $x$.
In particular, if $B_i$ is the root bag then $d_{T_i}(X)$ is the function that should be minimized on a minimum split as a secondary measure after minimizing $|X|$.

Additionally, for each node $i$ there may be an ``internal state'' stored in order to trace the dynamic programming backwards to implement the State queries after a Split query.
The internal state is a pair $((C_1 \cap B_i, C_2 \cap B_i, C_3 \cap B_i, X \cap B_i), h)$, specifying the table entry of this node to which the minimum split fixed by the previous Split query corresponds.

We note that if $B_i$ is a leaf bag then $|V(G[T_i])| \le w+1$, and therefore all entries $U[i][\ldots][\ldots]$ can be computed directly in $2^{\OO(w)}$ time.

\subsubsection{Transitions}
Let $i$ be a node with at most three child nodes $a,b,c$, in particular $i$ corresponding to bag $B_i$ and the child nodes to bags $B_a$, $B_b$, $B_c$.
We next describe how to compute in $2^{\OO(w)}$ time the table entries $U[i][\ldots][\ldots]$ given the table entries $U[a][\ldots][\ldots]$, $U[b][\ldots][\ldots]$, and $U[c][\ldots][\ldots]$.

First, we edit the stored distances $d_{T_i}(X)$ in the entries $U[\{a,b,c\}][\ldots][\ldots]$ to correspond to distances from $B_i$.
In particular, we increment the stored distance $d_{T_i}(X)$ in each entry $U[j][(C_1 \cap B_j, C_2 \cap B_j, C_3 \cap B_j, X \cap B_j)][h] \neq \bot$ by $h - |X \cap B_j \cap B_i|$.
Then we do the transition by first decomposing it into $\OO(w)$ ``nice'' transitions of types introduce, forget, and join.
In an introduce transition we have a bag $B_i$ with a single child bag $B_j$ with $B_j \subseteq B_i$ and $|B_i \setminus B_j| = 1$, in a forget transition we have a bag $B_i$ with a single child bag $B_j$ with $B_i \subseteq B_j$ and $|B_j \setminus B_i| = 1$, and in a join transition we have a bag $B_i$ with two child bags $B_j$,$B_k$ with $B_j = B_k = B_i$.
The decomposition is done by first forgetting every vertex not in $B_i$, then introducing every vertex in $B_i$, and then joining.

The transitions follow standard ideas of dynamic programming on tree decompositions and can be done in time $2^{\OO(w)}$ as follows.
We define $U[\ldots][\ldots][h] = \bot$ for all $h < 0$ and for all $h > w$.

\paragraph{Introduce} Let $\{v\} = B_i \setminus B_j$.
For each partition $(C_1 \cap B_i, C_2 \cap B_i, C_3 \cap B_i, X \cap B_i)$ of $B_i$ and each integer $0 \le h \le w$ we set 
\begin{align*}
&U[i][(C_1 \cap B_i, C_2 \cap B_i, C_3 \cap B_i, X \cap B_i)][h] =\\
&U[j][(C_1 \cap B_i \setminus \{v\}, C_2 \cap B_i \setminus \{v\}, C_3 \cap B_i \setminus \{v\}, X \cap B_i \setminus \{v\})][h - |\{v\} \cap X|],
\end{align*}
if there are no edges between $C_1 \cap B_i$, $C_2 \cap B_i$, $C_3 \cap B_i$, and otherwise to $\bot$.

\paragraph{Forget} Let $\{v\} = B_j \setminus B_i$.
For each partition $(C_1 \cap B_i, C_2 \cap B_i, C_3 \cap B_i, X \cap B_i)$ of $B_i$ and each integer $0 \le h \le w$ we set
\begin{align*}
&U[i][(C_1 \cap B_i, C_2 \cap B_i, C_3 \cap B_i, X \cap B_i)][h] =&\\
\min \{\quad&U[j][(C_1 \cap B_i \cup \{v\}, C_2 \cap B_i, C_3 \cap B_i, X \cap B_i)][h],&\\
&U[j][(C_1 \cap B_i, C_2 \cap B_i \cup \{v\}, C_3 \cap B_i, X \cap B_i)][h],&\\
&U[j][(C_1 \cap B_i, C_2 \cap B_i, C_3 \cap B_i \cup \{v\}, X \cap B_i)][h],&\\
&U[j][(C_1 \cap B_i, C_2 \cap B_i, C_3 \cap B_i, X \cap B_i \cup \{v\})][h]\quad\},&
\end{align*}
where $\min(\bot, n) = n$ for any integer $n$.

\paragraph{Join} Let $B_j$, $B_k$ be the child bags of $B_i$. 
For each integer $0 \le h \le w$ and each partition $(C_1 \cap B_i, C_2 \cap B_i, C_3 \cap B_i, X \cap B_i)$ of $B_i$ we set 
\begin{align*}
&U[i][(C_1 \cap B_i, C_2 \cap B_i, C_3 \cap B_i, X \cap B_i)][h] =&\\
\min_{h_1 + h_2 = h + |X \cap B_i|} \Big(\quad&U[j][(C_1 \cap B_i, C_2 \cap B_i, C_3 \cap B_i, X \cap B_i)][h_1]+&\\
&U[k][(C_1 \cap B_i, C_2 \cap B_i, C_3 \cap B_i, X \cap B_i)][h_2]\quad\Big),&
\end{align*}
where $\bot + n = \bot$ and $\min(\bot, n) = n$ for any integer $n$.
Note that we do not double count $d_{T_i}(x)$ for any $x \in X$ because if $x$ is in both subtrees of $j$ and $k$ then it is also in $B_i$ and therefore has $d_{T_i}(x) = 0$.

\subsubsection{Split query}
Now the Split query on the node $r$ with bag $W$ amounts to iterating over all integers $0 \le h \le w$ and partitions $(C_1 \cap W, C_2 \cap W, C_3 \cap W, X \cap W)$ of $W$ such that $|(W \cap C_i)| + h < |W|$ for all $i$, and returning $\bot$ if the entries of all of them contain $\bot$ and otherwise returning $\top$.
In the latter case, the internal state of the node $r$ will be set to a pair $((C_1 \cap W, C_2 \cap W, C_3 \cap W, X \cap W), h)$ so that $U[r][(C_1 \cap W, C_2 \cap W, C_3 \cap W, X \cap W)][h]$ is not $\bot$, primarily minimizes $h$, and secondarily minimizes the stored integer $d_{T_i}(X)$.
In particular, so that $(C_1, C_2, C_3, X)$ is a minimum split and~$|X| = h$.

Also, the internal states of all other nodes are invalidated by e.g. incrementing a global counter.

\subsubsection{Move query}
Consider a move from a node $r$ to an adjacent node $s$.
First, if there has been a successful Split query after the previous Init or Edit query, but the child nodes of $r$ do not have valid internal states, we use the internal state $((C_1 \cap W, C_2 \cap W, C_3 \cap W, X \cap W), h)$ of $r$ to compute the corresponding internal states of its child nodes by implementing the dynamic programming transitions backwards.
In particular, we can in time $2^{\OO(w)}$ find the dynamic programming states of the children of $r$ that correspond to the split fixed by the previous successful Split query, and set the internal states of the children to correspond to these dynamic programming states.
Now the node $s$ is guaranteed to have a valid internal state before we move to it, and by induction the current node $r$ is always guaranteed to have a valid internal state.

Then, when moving the root from the node $r$ to the node $s$, the only edge whose direction towards the root changes is the edge between $r$ and $s$.
Therefore for all nodes $i$ except $r$ and $s$ the subtree $T_i$ rooted at $i$ will stay exactly the same.
Thus, we first re-compute the dynamic programming table of $r$ with a single transition and then the dynamic programming table of $s$ with a single transition, taking in total $2^{\OO(w)}$ time.
Note that all of the re-computations of the dynamic programming tables happen after computing the internal states, so the internal state of each node corresponds to the dynamic programming table directed towards the node at which the previous successful Split query was applied.

\subsubsection{Init query}
The dynamic programming tables are initialized with the already described transitions in a bottom-up manner, starting from the leaves towards the root $r$.
As the initial tree decomposition $T$ has $\OO(n)$ nodes and each transition is implemented in $2^{\OO(w)}$ time, the initialization takes $2^{\OO(w)} n$ time.

\subsubsection{State query}
With the move queries we have already guaranteed that the current node $r$ has a valid internal state corresponding to a minimum split $(C_1, C_2, C_3, X)$, if indeed there has been a successful Split query after the previous Init or Edit query.
Therefore we just return the partition $(C_1 \cap W, C_2 \cap W, C_3 \cap W, X \cap W)$ of the internal state.

\subsubsection{Edit query}
Consider an edit query that replaces a subtree $T_1$ with $T_2$, where $r \in V(T_1)$.
Because $r \in V(T_1)$, all the dynamic programming tables of nodes of $T \setminus T_1$ are already oriented towards the subtree $T_1$, and therefore the tables for the inserted nodes $T_2$ can be constructed in a bottom-up manner with $|T_2|$ transitions.
Then, with at most $|T_2|$ Move operations the root $r$ can be moved to the specified root $r'$.
Therefore, the total time complexity will be $2^{\OO(w)} |T_2| + w^{\OO(1)} |T_1|$.

\subsection{The algorithm}
\label{subsec:mainalg}
We now describe our algorithm, making use of the data structure.
The goal is to traverse the given tree decomposition $T$ of width $w$ with the Move($s$) operations, and every time when a bag $W$ of size $|W| = w+1$ is encountered, to apply the pruned improvement operation.

We start by showing that when the root pointer $r$ of the data structure is on a splittable bag $W$ of size $|W| = w+1$, the pruned improvement operation can be implemented in time $2^{\OO(w)} t$, where $t$ is the number of editable bags.

\begin{lemma}
\label{lem:impr_impl}
Let the state of the data structure be so that the root $r$ is on a bag $W$ of size $|W| = w+1$.
There is an algorithm that either in time $2^{\OO(w)}$ reports that $W$ is not splittable, or in time $2^{\OO(w)} t$ transforms $T$ into the pruned improved tree decomposition of $T$ with respect to $W$ and a minimum split $(C_1, C_2, C_3, X)$ of $W$, where $t$ is the number of editable bags.
In the latter case, the pointer $r$ of the data structure will be placed to some new bag introduced by the pruned improvement operation.
\end{lemma}
\begin{proof}
First, we use the Split query.
If it returns $\bot$ we return that $W$ is not splittable.
Otherwise, it returns that $W$ is splittable and sets the internal state of the data structure to represent a minimum split $(C_1, C_2, C_3, X)$ of $W$.

Then, as the editable bags form a subtree $T_E$ of $T$, we use the Move and State queries to find the subtree $T_E$, the bags $N(V(T_E))$ neighboring $T_E$, and for all bags $B$ in $V(T_E)$ and $N(V(T_E))$ the partitions $(C_1 \cap B, C_2 \cap B, C_3 \cap B, X \cap B)$.
Because $T$ has maximum degree 3, this can be done with $\OO(|V(T_E)|)$ Move and State queries by using them to implement a depth-first search that returns from a subtree as soon as it finds a bag $B$ for which $B \subseteq C_i \cup X$ holds.

Then, we construct the pruned improved tree decomposition for editable bags.
For this, we need to determine the sets $B^X$ for all editable bags $B$.
We observe that by the definition of $B^X$, it holds that if $B$ has children $B_1$, $B_2$, and $B_3$, then 
\[B^X = B_1^X \cup B_2^X \cup B_3^X \cup (X \cap (B_1 \cup B_2 \cup B_3)) \setminus B.\]
Therefore, by using \Cref{obs:covbx} that $B^X = \emptyset$ for covered bags $B$, the sets $B^X$ can be computed in a bottom-up manner in $|V(T_E)| w^{\OO(1)}$ time.
After computing the sets $B^X$, computing the pruned improved tree decomposition $T'_E$ for the editable bags can be done directly by definition (\Cref{def:prunedti,def:prunedtd}) in $|V(T_E)| w^{\OO(1)}$ time.

Then, we use the Edit operation to replace the subtree $T_E$ of editable bags by the constructed $T'_E$.
Here, the mapping $p$ from $N(V(T_E))$ to $V(T'_E)$ is determined as follows.
Let $B_v$ be a bag of a node $v \in N(V(T_E))$, let $i \in \{1,2,3\}$ so that $B_v$ is covered by $C_i$, and let $u$ be the parent of $v$ in $T$.
Now, $u \in V(T_E)$, and in $V(T'_E)$ there are three copies $u^1$, $u^2$, and $u^3$ corresponding to $u$.
The mapping $p$ is set so that $p(v) = u^i$.
The node $r'$ is set to be an arbitrary node in $V(T'_E)$.
This implements the construction of the pruned improved tree decomposition.

In total, we used one Split operation, $\OO(|V(T_E)|)$ Move and State operations, and one Edit operation with subtrees of size $\OO(|V(T_E)|)$, and therefore the total time complexity is $2^{\OO(w)} |V(T_E)|$ which is $2^{\OO(w)} t$.
\end{proof}

Now, by \Cref{lem:local}, the total time used in the improvement operations implemented as described in the proof of \Cref{lem:impr_impl} is bounded by $\phi_w(T) 2^{\OO(w)} = 2^{\OO(w)} n$.
What is left is to show is that between the improvement operations, we can use the Move operations to move the pointer $r$ to the next bag of size $|W| = w+1$ so that the total number of Move operations used is bounded by $\phi_w(T)$.
We do this with a depth-first-search type algorithm as we next describe.

We traverse the tree decomposition in a depth-first order with Move operations.
For simplicity, we add an extra starting node $h$ with empty bag and degree 1 to the tree decomposition and set the root pointer $r$ to $h$ initially.
Note that an empty bag cannot be editable, so the node $h$ will never be edited by the pruned improvement operation.
For all nodes there are three states -- unseen, open, and closed.
At the start the node $h$ is open and other nodes are unseen.
Let $W$ denote the bag of the node $r$.
Our algorithm traverses the tree decomposition according to the following cases:
\begin{enumerate}
\item The node $r$ is open and has an unseen neighbor node $s$: Apply Move($s$) and set the node $s$ as open.
\item The node $r$ is open and has no unseen neighbors:
\begin{enumerate}
\item It holds that $r = h$: We are done, return $T$.
\item It holds that $|W| \le w$: Set $r$ as closed and apply Move($s$) where $s$ is an open neighbor of $r$.
\item It holds that $|W| = w+1$: Apply \Cref{lem:impr_impl}. If it returns that $W$ is not splittable, then return that the treewidth of $G$ is larger than $k$. Otherwise, the new nodes inserted by the pruned improvement operation are set as unseen, and then the root $r$ is moved to a node that is open and adjacent to a newly inserted node.
\end{enumerate}
\end{enumerate}

Next we prove two key invariants for arguing the correctness and time complexity of the above described procedure, in particular that despite the improvement operations, the main properties of the procedure are similar to a standard depth-first-search.
First, we show that the open nodes form a path in the tree decomposition.

\begin{lemma}
\label{lem:dfsinva1}
The above described procedure maintains the invariant that the open nodes form a path $v_1, \ldots, v_p$, where $v_1 = h$ and $v_p = r$.
\end{lemma}
\begin{proof}
The case~1 maintains the invariant by appending one vertex to the end of the path and the case~2b by removing the last vertex of the path.
For the case 2c, we first observe that the editable subtree contains the node $r$ but not $h$ because the bag of $h$ is empty.
Therefore, because $T$ is a tree and $v_1, \ldots, v_p$ is a path from $h$ to $r$, removing the editable subtree removes a suffix $v_i, \ldots, v_p$ of the path, where $i>1$.
Then, the only node on the path $v_1, \ldots, v_{i-1}$ adjacent to the newly inserted nodes is $v_{i-1}$, which is then chosen as the new root $r$, and thus the invariant is maintained.
\end{proof}

Then, we show that the open and unseen nodes form a subtree of $T$.

\begin{lemma}
\label{lem:dfsinva2}
The above described procedure maintains the invariant that the open and unseen nodes form a subtree of $T$.
\end{lemma}
\begin{proof}
The cases that could change the set of closed nodes are 2b and 2c.
In the case 2b, by \Cref{lem:dfsinva1} the neighbor of $r$ that is on the path between $r$ and $h$ is open, and the other neighbors of $r$ are closed.
Therefore, setting $r$ as closed corresponds to removing a leaf node of the subtree of open and unseen nodes.

In the case 2c, all the neighbors of the nodes removed in the pruned improvement operation are connected to the subtree of new nodes inserted, which are all set to unseen and are connected to the path of open nodes maintained by \Cref{lem:dfsinva1}.
\end{proof}

Finally, we put everything together in the following lemma.

\begin{lemma}
\label{lem:finalg}
There is an algorithm that, given an $n$-vertex graph $G$, integer $k$, and a degree-3 tree decomposition $T$ of $G$ of width $w$, where $2k+2 \le w \le 4k+3$, in time $2^{\OO(w)} n$ either outputs a tree decomposition of $G$ of width at most $w-1$ or decides that the treewidth of $G$ is larger than $k$.
\end{lemma}
\begin{proof}
The algorithm implements the above described procedure with the data structure of \Cref{sec:dsint}.
We first prove that the algorithm is correct if it terminates, and then show that it indeed terminates in $2^{\OO(w)} n$ time.

First, the algorithm is correct when it returns that the treewidth of $G$ is larger than $k$ because in that case we have a set $W \subseteq V(G)$ of size $|W| = w+1 \ge 2k+3$ that is not splittable, and by \Cref{lem:part} this implies that the treewidth of $G$ is larger than $k$.
Second, consider the case that the algorithm terminates in the case 2a.
In this case, by \Cref{lem:dfsinva1}, the only open node is the node $r=h$, and as all neighbors of $r$ are closed, \Cref{lem:dfsinva2} guarantees that all nodes except $r$ are closed.
As a node can get closed only in case 2b, in which case the bag of the node is guaranteed to have size at most $w$, this implies that all bags in this case must have size at most $w$, and therefore the treewidth of $T$ must be at most $w-1$.
Therefore the algorithm is correct if it terminates.

By \Cref{lem:local}, the total number of editable bags over the course of the algorithm is at most $\phi_w(T) = 2^{\OO(w)} n$, and therefore the total number of bags created by pruned improvement operations is also at most $2^{\OO(w)} n$.
It also implies that the total time spent in cases~2c of the procedure is bounded by $2^{\OO(w)} \phi_w(T) = 2^{\OO(w)} n$.

For bounding the Move operations applied in cases 1 and 2b of the procedure, observe that both of them advance the state of a node either from unseen to open, or from open to closed.
Therefore, the number of Move operations in these cases is bounded by two times the total number of nodes over the course of the algorithm, which is bounded by $\OO(n+\phi_w(T))$.
This gives a total time complexity of $2^{\OO(w)} n$ for the algorithm.
\end{proof}

Now, \Cref{lem:finalg} together with \Cref{lem:compression} gives \Cref{the:main}.

\subsection{Analysis of the $2^{\OO(k)}$ factor}
\label{subsec:finegrained}
We briefly give an upper bound for the $2^{\OO(k)}$ factor in the time complexity of our algorithm, in order to support our claim that this factor in our algorithm is significantly smaller than in the algorithms of~\cite{DBLP:journals/siamcomp/BodlaenderDDFLP16}.

First, we show that if the width $w$ of the given tree decomposition is at least $3k+3$, then we can use splits where $C_3 = \emptyset$.

\begin{lemma}
Let $G$ be a graph of treewidth $\le k$.
Any vertex set $W \subseteq V(G)$ of size $|W| \ge 3k+4$ has a split of form $(C_1, C_2, \emptyset, X)$.
\end{lemma}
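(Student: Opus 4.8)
The statement is the 2-way analogue of Lemma~\ref{lem:part}, so the plan is to mimic that proof but be slightly more careful with the arithmetic, since combining components into only \emph{two} parts forces one part to absorb up to half of $W$ plus the separator. First I would invoke Lemma~\ref{lem:balsep} to obtain a balanced separator $X$ of $W$ with $|X| \le k+1$, so that every connected component $C$ of $G \setminus X$ satisfies $|W \cap C| \le |W|/2$. I would then group the components of $G \setminus X$ greedily into two bins $C_1, C_2$, always adding the next component to the currently lighter bin (with respect to $|W \cap \cdot|$). This produces a partition $(C_1, C_2, \emptyset, X)$ of $V(G)$ with no edges between $C_1$ and $C_2$ (each is a union of components of $G \setminus X$).

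The key step is bounding $|W \cap C_i|$. If at the end some bin, say $C_1$, has $|W \cap C_1| > |W \cap C_2|$, consider the last component $C$ that was added to $C_1$: at the moment it was added, $C_1$ was the lighter bin, so $|W \cap (C_1 \setminus C)| \le |W \cap C_2| \le |W|/2$, and $|W \cap C| \le |W|/2$ as well. Naively this only gives $|W \cap C_1| \le |W|$, which is useless. The standard fix — and the step I expect to be the real (mild) obstacle — is to observe that $|W \cap (C_1 \setminus C)| \le |W \cap C_2|$ and $|W \cap C_1| + |W \cap C_2| = |W \cap (V(G) \setminus X)| = |W| - |W \cap X| \ge |W| - (k+1)$ together force $|W \cap (C_1\setminus C)| \le (|W| - |W \cap X|)/2$; combined with $|W \cap C| \le |W|/2$ this still isn't quite tight enough by itself, so I would instead split into cases according to whether a single component already has $|W \cap C| > |W|/2 - (k+1)$ type bound, or argue via the cleaner invariant that after greedy balancing $|W \cap C_1| - |W \cap C_2| \le \max_C |W \cap C| \le |W|/2$, hence $|W \cap C_1| \le (|W| - |W\cap X| + |W|/2)/2 \le 3|W|/4 - |W\cap X|/2$; wait—this is what needs the $3k+4$ threshold. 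Let me restate: the usable inequality is $|W \cap C_1| \le \frac{|W| - |W\cap X|}{2} + \frac{1}{2}\max_C |W\cap C| \le \frac{|W|-|W\cap X|}{2} + \frac{|W|}{4} = \frac{3|W|}{4} - \frac{|W\cap X|}{2}$.

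Finally I would verify that $(C_1, C_2, \emptyset, X)$ is a split, i.e. $|(W \cap C_i) \cup X| < |W|$ for $i \in \{1,2,3\}$. For $i=3$ this is $|W \cap X| \le k+1 \le |X| < |W|$, which holds since $|W| \ge 3k+4$. For $i \in \{1,2\}$, using the bound above, $|(W\cap C_i) \cup X| \le |W \cap C_i| + |X \setminus W| \le \bigl(\tfrac{3|W|}{4} - \tfrac{|W\cap X|}{2}\bigr) + (|X| - |W \cap X|)$; bounding $|X| \le k+1$ and dropping the negative term gives $\le \tfrac{3|W|}{4} + (k+1)$, and this is $< |W|$ exactly when $k+1 < |W|/4$, i.e. $|W| > 4k+4$. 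Since this is slightly weaker than advertised, I would tighten the component-grouping bound: a cleaner route is to note $|W\cap C_1| \le \lceil (|W|-|W\cap X|)/2 \rceil$ is impossible in general, so instead one keeps the largest single component $C^*$ alone as $C_1$ if $|W\cap C^*| > (|W|-|X|)/2$ — but since $|W\cap C^*| \le |W|/2$, setting $C_1 = C^*$ and $C_2 = $ everything else gives $|W\cap C_1| \le |W|/2$ directly and $|W\cap C_2| \le |W|/2$, so $|(W\cap C_i)\cup X| \le |W|/2 + (k+1) < |W|$ whenever $|W| > 2k+2$; otherwise every component has $|W\cap C| \le (|W|-|X|)/2$ and greedy balancing yields $|W\cap C_i| \le \tfrac{|W|-|W\cap X|}{2} + \tfrac{|W|-|X|}{4}$, from which $|W| \ge 3k+4$ suffices. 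The main obstacle is thus purely the careful casework on the largest component to pin down the exact constant $3k+4$; once the case split is set up, each inequality is routine.
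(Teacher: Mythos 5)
Your overall strategy (balanced separator from Lemma~\ref{lem:balsep}, then regroup the components of $G \setminus X$ into two parts) is the right one and your second case is sound, but the first case of your final argument contains a genuine error. From the hypothesis $|W\cap C^*| > (|W|-|X|)/2$ you claim that taking $C_1 = C^*$ and $C_2$ = all remaining components gives $|W\cap C_2| \le |W|/2$, and you conclude the split property already for $|W| > 2k+2$. That intermediate claim is false in general: all you can deduce is $|W\cap C_2| = |W| - |W\cap X| - |W\cap C^*| < |W|/2 + |X|/2 - |W\cap X|$, which can exceed $|W|/2$ when $W\cap X$ is small. Concretely, with $k\ge 1$, $|W| = 3k+4$, $|X| = k+1$, $W\cap X = \emptyset$ and $|W\cap C^*| = k+2$, the rest of the components carry $2k+2 > |W|/2$ vertices of $W$. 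The case is nevertheless repairable under the lemma's actual hypothesis: $|(W\cap C_2)\cup X| = |W\cap C_2| + |X| < |W|/2 + 3|X|/2 - |W\cap X| \le |W|/2 + 3(k+1)/2 \le |W|$ once $|W| \ge 3k+3$, so the conclusion survives, but not by the inequality you wrote and not down to the threshold $2k+2$ you state for that case. Your case 2 (every component has $|W\cap C| \le (|W|-|X|)/2$, then greedy balancing gives larger bin at most half the total plus half the largest item) does check out and yields $|(W\cap C_i)\cup X| \le \tfrac{3}{4}(|W| + k+1) < |W|$ for $|W| \ge 3k+4$. The earlier exploratory bound $3|W|/4 + (k+1)$, which needs $|W| > 4k+4$, should simply be cut.

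For comparison, the paper avoids this casework entirely: it repeatedly merges the two parts with the smallest $W$-weight until two parts remain, and observes that both final parts have $|W\cap C_i| \le 2|W|/3$ (if merging the two smallest among at least three parts produced weight exceeding $2|W|/3$, the remaining part would have weight exceeding $|W|/3$, making the total exceed $|W|$). Then $|(W\cap C_i)\cup X| \le 2|W|/3 + k+1 < |W|$ exactly when $|W| \ge 3k+4$. This single invariant replaces your distinction between a dominant component and the balanced regime, and is the cleaner way to pin down the constant; I recommend adopting it or, failing that, redoing the case-1 arithmetic as indicated above.
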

\begin{proof}
Again, as in \Cref{lem:part}, let $X$ be a balanced separator of $W$ of size $|X| \le k+1$, and let us combine the two components $C_i$ of $G \setminus X$ with the smallest sizes of $C_i \cap W$ until we obtain a partition $(C_1, C_2, X)$ of $V(G)$.
By considering the cases of whether there is a component $C_i$ with $|W \cap C_i| \ge |W|/3$ or not, we notice that we will end up with $|W \cap C_i| \le 2|W|/3$ for both $i \in \{1,2\}$.
Therefore $(C_1, C_2, \emptyset, X)$ is a split of $W$ because $|(W \cap C_i) \cup X| \le 2|W|/3 + k+1 < |W|$.
\end{proof}

Now, if the width $w$ of the input tree decomposition is $w \ge 3k+3$, we apply a version of the algorithm that only considers 2-way splits, i.e., fixes $C_3 = \emptyset$.
Note that this also changes the definition of a minimum split to only minimize over splits with $C_3 = \emptyset$, but the proof of \Cref{lem:main} still goes through identically, in particular noting that if $C_i = \emptyset$ in the original split, then the $C'_i$ constructed for the contradiction argument will also be empty.

Then, we note that the exponential factors $2^{\OO(w)}$ in the time complexity of the data structure come from the number of ways a partition $(C_1, C_2, C_3, X)$ of $V(G)$ can intersect a bag $B$ of size at most $w+1$.
This is $\OO(4^w)$, and when $C_3 = \emptyset$ this is $\OO(3^w)$.
Therefore, when $w \ge 3k+3$, the time complexity of the algorithm is bounded by $(n + \phi_w(T)) 3^w w^{\OO(1)}$, and when $w < 3k+3$ the time complexity is bounded by $(n + \phi_w(T)) 4^w w^{\OO(1)}$.

We note that also the factors $3^{|B_i|}$ in the definition of the potential function can be replaced by factors $2^{|B_i|}$ in the case when $C_3 = \emptyset$.
Therefore, for the case when $w \ge 3k+3$, the total time complexity is $2^w 3^w w^{\OO(1)} n$, which by $w \le 4k+3$ is $\OO(2^{10.4k} n)$.
When $w < 3k+3$, the total time complexity is $3^w 4^w w^{\OO(1)} n$, which by $w < 3k+3$ is $\OO(2^{10.8k} n)$.
Here the factors polynomial in $k$ are dominated by rounding up the exponential dependency on $k$.
Therefore, the total time complexity of the algorithm is $\OO(2^{10.8k} n)$.

\section{Conclusion}
\label{sec:conc}
We gave a $2^{\OO(k)} n$ time 2-approximation algorithm for treewidth.
This is the first 2-approximation algorithm for treewidth that is faster than the known exact algorithms, and improves the best approximation ratio achieved in time $2^{\OO(k)} n$ from 5 to 2~\cite{DBLP:journals/siamcomp/BodlaenderDDFLP16}.

Our algorithm improves upon the algorithm of Bodlaender et al.~\cite{DBLP:journals/siamcomp/BodlaenderDDFLP16} also in the running time dependency on $k$ hidden in the $2^{\OO(k)}$ notation.
Bodlaender et al. do not include an analysis of this factor in their work, nor attempt to optimize this factor in any way, but we note that their algorithm makes use of dynamic programming with time complexity $\Omega(9^w)$ on a tree decomposition of width $w$, where an upper bound for $w$ is $30k$, yielding a rough estimate of $2^{95k}$ for the dependency on $k$.
While our algorithm constitutes progress in improving the dependency on $k$, the problem of finding a constant-factor treewidth approximation algorithm with running time $c^k n$, where the constant $c$ is small, remains open.
Nevertheless, we believe that despite somewhat impractical worst-case bounds, our techniques could be well applicable for practical implementations, and in fact, the MSVS heuristic proposed in~\cite{kosterthesis,DBLP:journals/networks/KosterHK02} already resembles our algorithm on some aspects.

After the conference version of this article~\cite{Korhonen21} was published, the techniques introduced in this work have been further developed in several subsequent works.
Fomin~and~Korhonen~\cite{DBLP:conf/stoc/FominK22} extended the techniques to the setting of branchwidth of symmetric submodular functions and obtained a $f(k) n^2$ time 2-approximation algorithm for rankwidth.
Korhonen~and~Lokshtanov~\cite{DBLP:journals/corr/abs-2211-07154} introduced a generalization of the improvement operation and used it, along with other techniques, to obtain a $2^{\OO(k^2)} n^4$ time exact algorithm for treewidth.
Very recently, Korhonen, Majewski, Nadara, Pilipczuk, and Soko\l{}owski~\cite{DBLP:journals/corr/abs-2304-01744} gave a dynamic algorithm for maintaining tree decompositions of small width, whose central ingredient is a refinement operation that further builds on the improvement operation of this article and its generalization by Korhonen~and~Lokshtanov~\cite{DBLP:journals/corr/abs-2211-07154}.

\section*{Acknowledgements}
I thank Otte Hein\"avaara, Mikko Koivisto, Hans Bodlaender, and Fedor V. Fomin for helpful comments.

\bibliographystyle{alpha}
\bibliography{paper}

\end{document}